\def\BibTeX{{\rm B\kern-.05em{\sc i\kern-.025em b}\kern-.08em
    T\kern-.1667em\lower.7ex\hbox{E}\kern-.125em}}
\newtheorem{theorem}{Theorem}
\newtheorem{proposition}{Proposition}
\newcommand{\symtext}[2]{\ensuremath{\stackrel{\text{#1}}{#2}}}
\DeclareMathOperator*{\argmax}{arg\,max}
\DeclareMathOperator*{\argmin}{arg\,min}
\def\P{{\mathbb P}}  
\def\E{{\mathbb E}}  
\def\S{{S}}
\def\A{{A}}
\def\D{{D}}
\def\B{{\mathcal B}}
\newcommand{\review}[1]{{#1}}
\newcommand{\note}[1]{{\color{blue}#1}}
\renewcommand{\note}[1]{#1}
\newcounter{storeeqcounter}
\newcounter{tempeqcounter}
\definecolor{applegreen}{rgb}{0.55, 0.71, 0.0}
\definecolor{bronze}{rgb}{0.8, 0.5, 0.2}
\definecolor{coolblack}{rgb}{0.0, 0.18, 0.39}
\newcounter{todo}
\newcommand\listtodoname{List of todos}
\newcommand\listoftodos{%
	\section*{\listtodoname}\@starttoc{tod}}
\begin{document}

\title{\note{Scheduling of Wireless Edge Networks for Feedback-Based Interactive Applications}}

\author{\IEEEauthorblockN{
		Samuele Zoppi\IEEEauthorrefmark{1},
		Jaya Prakash Champati\IEEEauthorrefmark{2},
		James Gross\IEEEauthorrefmark{3}, 
		and Wolfgang Kellerer\IEEEauthorrefmark{1}\\
	}
	\IEEEauthorblockA{
		\IEEEauthorrefmark{1}Chair of Communication Networks, Technical University of Munich, Germany\\
		\IEEEauthorrefmark{2}IMDEA Networks Institute, Spain\\
		\IEEEauthorrefmark{3}School of Electrical Engineering and Computer Science, KTH Royal Institute of Technology, Sweden\\
		Email: \{samuele.zoppi, wolfgang.kellerer\}\kern-1pt@tum.de, jamesgr\kern-1pt@kth.se, jaya.champati\kern-1pt@imdea.org\\
		\vspace{-2ex}
	}
}



\maketitle

\begin{abstract}
\note{
Interactive applications with automated feedback will largely influence the design of future networked infrastructures. In such applications, status information about an environment of interest is captured and forwarded to a compute node, which analyzes the information and generates a feedback message. Timely processing and forwarding must ensure the feedback information to be still applicable; thus, the quality-of-service parameter for such applications is the end-to-end latency over the entire loop. By modelling the communication of a feedback loop as a two-hop network, we address the problem of allocating network resources in order to minimize the delay violation probability (DVP), i.e. the probability of the end-to-end latency exceeding a target value. We investigate the influence of the network queue states along the network path on the performance of semi-static and dynamic scheduling policies. The former determine the schedule prior to the transmission of the packet, while the latter benefit from feedback on the queue states as time evolves and reallocate time slots depending on the queue’s evolution. The performance of the proposed policies is evaluated for variations in several system parameters and comparison baselines. Results show that the proposed semi-static policy achieves close-to-optimal DVP and the dynamic policy outperforms the state-of-the-art algorithms. 

}
\end{abstract}


\begin{IEEEkeywords}
\note{Feedback applications, end-to-end delay, delay violation probability, network state information, semi-static scheduling, dynamic scheduling, MDP.}
\end{IEEEkeywords}

\section{Introduction}
\note{
Interactive applications with automated feedback are arguably one of the most discussed application class when it comes to large-scale impact in future networked infrastructures today~\cite{Saad2020AProblems}.
In the literature, we can broadly distinguish two sub-cases of this new application class, namely cyber-physical systems (CPS) and human-in-the-loop systems~\cite{Rajkumar2010Cyber-physicalRevolution, Schirner2013TheSystems, 10.1145/2594368.2594383}.
In both cases, however, the underlying principle is the same: status information about a plant or an environment of interest is captured, and forwarded to a compute node, where the information is analyzed and potentially a feedback is generated in form of an actuation command or an augmentation/perceptual feedback; see Fig.~\ref{fig:network-architecture}.
In CPS, we encounter applications where direct actuation is applied to a physical object, for instance, in the context of industrial automation and automated driving. 
In contrast, augmented reality, cognitive assistance, and also to some extent virtual reality fall under the category of human-in-the-loop systems, where no direct actuation results from the feedback; instead, a human is presented perceptual feedback which potentially triggers some human reaction~\cite{10.1145/2594368.2594383}.

Key features about these applications is their tight integration with reality as well as the degree of automation that they allow.
Furthermore, by offloading these applications to networked infrastructures, an almost ubiquitous availability of corresponding services will be enabled in the future. 
However, the successful deployment of such applications rests crucially on a timely processing and forwarding along the pipeline to ensure that the feedback information is timely with respect to the original sensing data.
Thus, the quality-of-service (QoS) parameter for interactive applications is the latency over the entire loop, i.e. from capturing the status information until the point in time when the corresponding feedback information is exposed~\cite{Zhang2013a}.
For instance, motion-to-photon latency in virtual reality is a well-known concept that captures this QoS parameter.
Loosely speaking, one might refer to this QoS parameter as the \textit{end-to-end latency} with the crucial differentiation that a flow conversion occurs at the point of computation.
In fact, depending on the application, for a certain fraction of sensing data no feedback is generated at all, for instance in the case of cognitive assistance.
Finally, different applications may also require time-varying end-to-end latencies.
For instance, in CPS it is known that the criticality of sensing information can vary, leading to time-varying end-to-end latency constraints as the plant dynamics evolve.
\begin{figure}[t]
	\centering
	\includegraphics{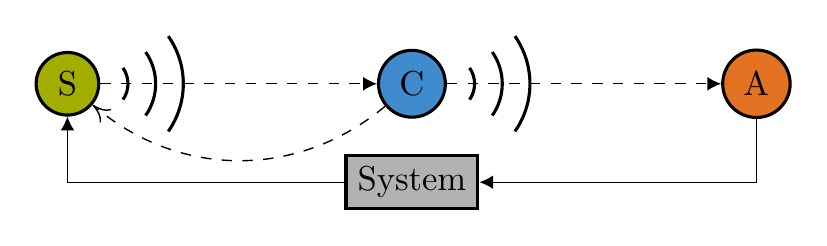}
	\caption{Architecture of interactive applications with automated feedback. Sensor (S), controller (C), and actuator (A) operate in a closed loop via a two-hop wireless network path.}
	\label{fig:network-architecture}
\end{figure}

Given the relevance of interactive applications, as well as their novel QoS requirements, a central question relates to the optimal support of such applications by networked infrastructures.
Truly ubiquitous service offering mandates wireless connectivity to the point of computation.
Furthermore, general latency requirements mandate near-by computational service, typically provided by edge computing~\cite{Shi2016EdgeChallenges}.
Hence, a networked infrastructure realizing an interactive application needs to provide bounded end-to-end latencies over a concatenated, heterogeneous network path with at least one compute element incorporated~\cite{Saad2020AProblems}. 
This complicates the provisioning of end-to-end latencies, as the individual elements of the network path are typically subject to multiple random effects, such as fading in case of the wireless links, operating system scheduling effects in case of the compute backend, and/or cross-traffic for both the communication and computation elements~\cite{Champati2020TransientRouting}. 
As a consequence, the end-to-end latency becomes essentially a random variable, which also depends on parameterizations of the network path such as resource allocation, task prioritization etc.
In order to steer the parameterization of the network path facing the uncertainty from different effects, a suitable metric is to minimize the delay violation probability (DVP), defined as the probability of the end-to-end latency exceeding a (constant or varying) target value. 
The key question then is to manage network path resources to minimize the DVP.
Addressing this question is at the heart of this work.

Acknowledging the fact that latency targets might vary over time, as well as cross-traffic contributing to the utilization of the individual elements of the path, we are interested in the fundamental question if \textit{initial conditions} of the network path should influence the path parameterization.
By initial conditions, we refer to the queue states along the network path, which is modelled as a concatenated queuing network.
This might be included in at least two different ways.
On the one hand, once a new sensor reading is available, a \emph{semi-static} resource allocation might be determined which is kept during the subsequent evolution of the system until the corresponding actuation command is delivered to the actuator.
On the other hand, a \emph{dynamic} policy constantly adjusts the resource allocation during the evolution of the system until the actuation command is delivered. 
Obviously, a constant adaptation of the resource allocation should provide a better performance in terms of observed delay violations at the price of increased signaling load.
But exactly how such algorithms should work, which complexity they bring, and which performance differences they imply for interactive applications is to the best of our knowledge open to date.

In this paper, the above questions are investigated for a two-hop network path incorporating the loop communication of the status information to a compute node and of the feedback message to an actuator, cf. Fig.~\ref{fig:network-architecture}.
The two-hop system follows an uplink/downlink model where network resources need to be assigned \emph{in competition}.
This applies to communication systems where time resources are typically shared between uplink and downlink, such as WirelessHART, LTE TDD, and 5G TDD NR.
The end-to-end latency of packets is dominated by the random delay caused by the retransmission of lost packets and thus the processing delay introduced by the compute node is assumed to be negligible.
Packet transmission follows a time-slotted medium access where network resources are organized in frames.
In each frame, the available time slots are entirely allocated to the two links that compete for resources.
Therefore, given the initial queue states of the two links, we investigate scheduling policies that allocate time slots of each frame to the links in order to minimize the DVP of packets belonging to interactive applications.

The main contributions of this paper are summarized in the following:
\begin{itemize}
    \item We show that the closed-form expression of DVP is intractable and derive two upper bounds for the DVP of packets traversing a two-hop network path given the initial network conditions.
    \item Novel heuristic scheduling policies that compute a semi-static resource allocation are proposed.
    \item Noting that DVP cannot be directly used for dynamic resource allocation, a dynamic heuristic scheduling policy that maximizes the network's throughput is proposed.
    \item A simulation study of semi-static scheduling policies as well as a comparison between the dynamic scheduling policy, the classical Backpressure (BP)~\cite{Tassiulas1990StabilityNetworks}, Max Weight (MW)~\cite{neely2010stochastic}, and Weighted-Fair Queuing (WFQ)~\cite{parekh1993generalized} scheduling policies is presented.
\end{itemize}

The rest of the paper is structured as follows.
Sec.~\ref{sec:related-work} provides a discussion of the related work.
Sec.~\ref{sec:network-model} defines the model of the two-hop network path and the problem statement.
Sec.~\ref{sec:dvp-heuristic} provides a general derivation of DVP and discusses its application for scheduling. 
In Sec.~\ref{sec:offline-scheduling} heuristic semi-static scheduling policies are derived, while Sec.~\ref{sec:online-scheduling} describes an MDP-based heuristic dynamic scheduler.
Sec.~\ref{sec:results} evaluates the performance of the proposed scheduling methods and provides a discussion on their applicability in different scenarios.
Finally, we conclude in Sec.~\ref{sec:conclusions}.

\section{Related Work}\label{sec:related-work}
Several existing works tackle the problem of resource allocation in a multi-hop wireless network to support time-critical applications.
Methods that make use of the queue state to allocate network resources follow a \emph{theoretical} approach~\cite{Tassiulas1993DynamicConnectivity, Tassiulas1990StabilityNetworks, Singh2019ThroughputLinks}.
In their pioneering work~\cite{Tassiulas1993DynamicConnectivity}, Tassiulas et al. derived the Max Weight scheduling policy, which allocates resources based on the transmitters' backlogs and achieves maximum throughput and minimum delay.
Their scenario, however, is different from the one in this work as they only considered a single-hop network.
For a multi-hop network, maximum throughput was achieved by the backpressure algorithm~\cite{Tassiulas1990StabilityNetworks}, which allocates resources based on the backpressure of queues in the network.
Differently than this work that minimizes the DVP of time-critical packets, their scheduling policy focused on throughput optimality.
Singh et al.~\cite{Singh2019ThroughputLinks} exploited information about the queue state to schedule transmissions in order to maximize throughput under delay constraints.
Their approach, however, considers the steady-state performance of packets and, differently than our approach, does not allocate resources to optimize the network for a single time-critical arrival.

From a different perspective, many practical works investigate resource allocation methods for real-time flows in Industrial Wireless Sensor Networks (IWSN).
Some of them compute schedules to allow several time-constrained applications to meet their deadlines assuming \emph{deterministic transmission outcomes}~\cite{Saifullah2010Real-timeNetworks, Saifullah2011End-to-endNetworks,Saifullah2012AccountingReport, Saifullah2015End-to-endNetworks, Wang2017, Modekurthy2019DistributedHART:Networks}.
Saifullah et al.~\cite{Saifullah2010Real-timeNetworks, Saifullah2011End-to-endNetworks} investigated the problem of real-time scheduling subject to end-to-end deadlines between sensors and actuators.
Differently than our work, however, communication between sensors and actuators is deterministic and packet loss is not considered.
In their recent works~\cite{Saifullah2012AccountingReport, Saifullah2015End-to-endNetworks}, communication failures due to packet loss are considered and retransmissions are used. 
In contrast to our work, however, the only provide a deterministic delay model for the communication between sensors, controller, and actuators.
Similarly, Wang et al.~\cite{Wang2017} calculated schedules to ensure the worst-case delay of packets in a flow, however, without assuming random packet loss.
Modekurthy et al.~\cite{Modekurthy2019DistributedHART:Networks} derived a distributed deadline-based scheduling algorithm based on the Earliest Deadline First policy.
Also in this case, differently than our scenario, packet loss is not considered and the random end-to-end delay of packets in the network is not characterized.

Other IWSN works tackle the problem of reliable communication in presence of \emph{random packet loss}~\cite{Yan2014Hypergraph-basedConstraints, Dobslaw2016, Gaillard2016}.
Dobslaw et al.~\cite{Dobslaw2016}, allocated resources to each transmitter in a path based on the required number of retransmissions to fulfil a given reliability constraint.
Differently than our model, however, their work did not consider a deadline for the packets.
Following a similar approach, Gaillard et al.~\cite{Gaillard2016} extended the pioneer traffic-aware centralized scheduler TASA~\cite{Palattella2012TrafficNetworks} including retransmissions to guarantee flow reliability requirements.
Also in this case, however, traffic is not time-constrained.
Yan et al.~\cite{Yan2014Hypergraph-basedConstraints} developed a scheduling method that allocates time slots to the transmitters in order to maximize the network reliability under delay constraints.
A major difference of their work, which is common to Dobslaw and Gaillard et al., is that reliability constraints are defined for all the flows in the network.
Our approach instead optimizes the network resources to maximize the application reliability of each time-critical packet.

Recent works tackle investigate resource allocation methods providing \emph{per-packet} delay and reliability performance ~\cite{Chen2018ProbabilisticControl, Brummet2018ANetworks, Gong2019ReliableNetworks, Soldati2010OptimalNetworks}.
Similarly to DVP, Chen et al.~\cite{Chen2018ProbabilisticControl} computed, for each packet, the number of transmissions required to fulfil the application deadline with a given probability.
Their work, however, only considered a single-hop scenario and cannot be applied to the considered two-hop network path.
Brummet et al.~\cite{Brummet2018ANetworks} developed a method to dynamically allocate retransmissions to each packet and at each network hop subject to delay and reliability requirements.
They followed a different approach as their schedules are designed to fulfil the requirements and limit the maximum number of retransmissions.
Therefore, their scenario did not optimize network resources to maximize the per-packet DVP. 
A similar approach was used by Gong et al.~\cite{Gong2019ReliableNetworks}, which allocated time slots to transmitters fulfilling per-packet delay and reliability constraints while minimizing the number of resources.
Also in this case, however, they considered a finite number of retransmissions and the network resources are not optimized to minimize DVP.
On the contrary, Soldati et al.~\cite{Soldati2010OptimalNetworks} allocated time slots over multiple hops maximizing the end-to-end reliability of each time-critical packet subject to a deadline.
Differently than our approach that characterizes the distribution of end-to-end delay considering the correlation of transmission outcomes over consecutive transmitters, their scenario assumes that resource allocations of consequent transmitters are independent.

This work extends our previous work~\cite{Zoppi2020dynamic} by deriving semi-static scheduling policies and investigating the impact of queue state information on the minimum achievable DVP.
We achieve this thanks to a transient queuing model of the network and by modelling the end-to-end delay of each packet of interactive applications.
This is different from the available state-of-the-art for multiple reasons.
Instead of considering QoS for a stationary flow, we analyse the random end-to-end delay incurred for each time-critical arrival.
The considered queuing model allows us to derive scheduling policies taking into account the correlation between subsequent transmitters introduced by random transmission outcomes.
This is different from the related work as existing scheduling policies optimize network resources based on the interaction of multiple independent flows sharing the network.
Furthermore, by allocating a finite amount of retransmissions, all the existing methods allow application packets to be dropped, which may result in critical failures of feedback systems.
Finally, we investigate the impact of queue state information on semi-static and dynamic scheduling policies, which, to the best of the knowledge of the authors, was never tackled in the literature for a two-hop network path.
}
\section{System Model and Problem Statement}\label{sec:network-model}
\note{We study the communication scheduling problem of a feedback system consisting of a sensor, a control logic, and an actuator. 
The two-hop data communication from the sensor to  the controller, and the controller to the actuator is performed via error-prone wireless links.
We now describe the details of the model and the network elements, and present the formulation of the scheduling problem.}

\subsection{Arrivals, Backlogs, and Departures}
We model the sensor-controller link, and controller-actuator link using a packet-flow, discrete-time, two-queue \review{lossy wireless} network with first-come-first-serve discipline, cf. Fig.~\ref{fig:network-model}.
The time is discretized into slots, which are grouped in frames. 
A sequence of $y$ packets arrives at the first queue in frame $0$\footnote{We consider frame $0$ for notational simplicity; nevertheless, our analysis is equally valid starting the system with any other frame.}. 
These packets are time critical with a requirement that they depart the second queue within next $w$ frames, where $w$ is finite. \note{These packets, for instance, may belong to a time-critical message whose latency could significantly impact the application performance, i.e. the stability/safety of a feedback system.} 
We are thus interested in analyzing the two-queue network path for the time frames $k \in \{0,1,\ldots, w-1\}$. 
In this transient regime, the delay incurred by the time-critical packets depends on the initial backlogs in the queues at \note{frame $0$}, and the temporal variations in the service received by the queues. 
\begin{figure}[t]
	\centering
	\includegraphics{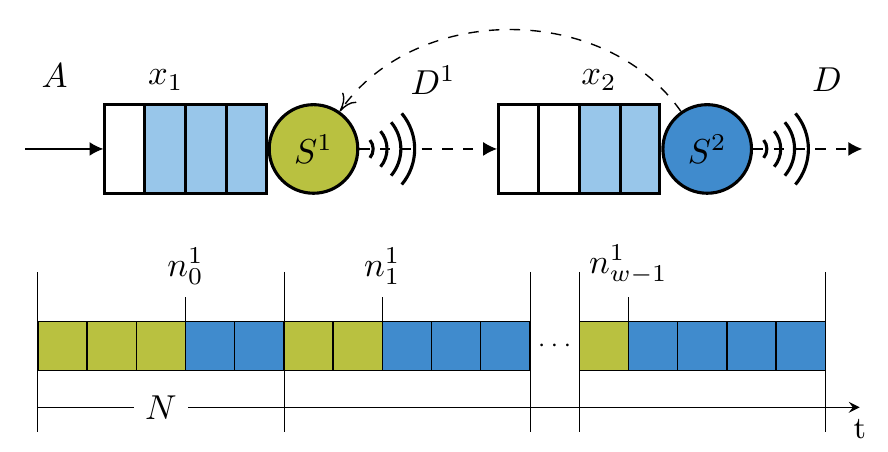}
	\caption{\note{Network model of the two-hop network path. The available time slots are entirely allocated to the two queues at each frame until the deadline.}}
	\label{fig:network-model}
\end{figure}
 

We use $i\in \{1,2\}$ to index the queues. Let $x_i$ denote the backlog in queue $i$ in frame $0$. 
Let $\A^i(k)$ and $\D^i(k)$ denote the cumulative arrivals and departures at queue $i$, in frame $k$. For $k=0$, all the quantities are set to zero. For $k \geq 1$, we define
\begin{align}
A^\text{1} (k) &= y + x_1,\label{eq:cumulative-arrivals-q1}\\
A^\text{2} (k) &= D^\text{1}(k-1) + x_2, \label{eq:cumulative-arrivals-q2}\\
\D^i(k) &= \sum_{j=0}^{k-1} d^i_j,
\end{align}
where $d^i_j$ is the number of packets departed queue $i$ in frame $j$. 
In Eq.~\eqref{eq:cumulative-arrivals-q2} a one-step delay is introduced between the reception of a packet and its service at the second queue indicating that packets must be fully received before being relayed. 
In the following, we use $A(k) = A^\text{1} (k)$ and $D(k) = D^\text{2}(k)$.
\note{For analytical simplicity, we assume that a packet received by the controller is processed within the same frame of reception, i.e. processing latencies are negligible, and results in a new packet carrying the feedback information.
Sensor and actuator messages can assume arbitrary size, however, we assume that their size is fixed to a maximum size of $B$ bits.} 

The end-to-end virtual delay, denoted by $W(k)$, is defined as
\begin{align}\label{eq:virtual-delay}
W(k) = \inf \left\lbrace u\geq 1 : \A(k) + x_2 \leq \D(k+u-1)\right\rbrace.
\end{align}
It quantifies the delay faced by the cumulative arrivals till frame $k-1$. 

\subsection{\note{Lossy Wireless Network Model}}
\note{At the link layer, we consider an error-prone time-slotted system where multiple frequencies can be used for transmission.
Packet loss is caused by fading in the received signal, which can arise, for instance, from shadowing, mobility, or external interference. 
We assume that a frequency diversity mechanism is used in the network and sequential packet transmissions are characterized by uncorrelated channel fades.
Whenever critical messages are transmitted via unreliable wireless links, it is a common approach to deploy frequency diversity techniques, such as frequency hopping or frequency scheduling, to avoid sequential packet drops due to correlated channel fades. Thus, we restrict our analysis to the time domain.
}

We model the random service provided for a single packet transmission as a Bernoulli r.v. according to an average Packet Error Rate (PER) of the communication link.
That is, a packet is lost with probability $p_e$ and received with probability $1-p_e$. 
\note{The PER is determined by the average Signal-to-Noise-and-Interference-Ratio (SINR) which in turn is determined by the combination of the propagation environment and the modulation and coding scheme used for transmission.} 

\note{Each frame comprises of $N$ time slots to be shared between the transmissions of packets from the two queues in the uplink and the downlink, cf. Fig.~\ref{fig:network-model}.}
In frame $k$, let $n_k^\text{1}$ and $n_k^\text{2}=N-n_k^\text{1}$ denote the slots used for transmitting the packets from the first queue and the second queue, respectively. 
Given this frame allocation, the service offered by the $i$-th link at frame $k$ is distributed as a Binomial r.v. given by
\begin{align}\label{eq:frame-service}
b_k^i(n_k^i) \sim \B\left(n^i_k ,1-p_e\right).
\end{align}
The cumulative service provided on the link at queue $i$ in $k$ consecutive frames is equal to a summation of Binomial random variables with parameters $1-p_e$, which is also a Binomial r.v. given by
\begin{IEEEeqnarray}{rCl}
	\IEEEeqnarraymulticol{3}{l}{
		\S^i(k) = \sum_{j=0}^{k-1} b^i_j(n^i_j) \sim \B\left(\sum_{j=0}^{k-1}n^i_j,1-p_e\right).
	}\label{eq:cumulative-service}
\end{IEEEeqnarray}

\subsection{Problem Statement}
\note{We are interested in optimizing the dynamic service offered by the wireless links of sensor and controller to minimize the end-to-end delay of a time-critical arrival while it traverses the network.
In particular, in order to investigate scheduling policies that exploit initial network conditions, we study the impact of queue state information on the achievable performance of semi-static and dynamic resource allocations.

We define a scheduling policy $\pi$ as the allocation of time slots to both queues in every frame until the deadline, i.e. $\pi \triangleq \mathbf{n}^1 = \{n^\text{1}_0,n^\text{1}_1,\ldots,n^\text{1}_{w-1}\}$, equivalently $\pi \triangleq \mathbf{n}^2 = N-\mathbf{n}^1$.
Different scheduling algorithms are computed based on the queue state information $\mathbf{q}_k = (q^\text{1}_k,q^\text{2}_k)$, where $q^\text{1}_k$ and $q^\text{2}_k$ denote the lengths of first and second queues in frame $k$, respectively.

In the following, we consider scheduling policies that compute semi-static and dynamic resource allocations.
On the one hand, a semi-static scheduling policy, denoted by $\pi_S$, computes a schedule  based on the initial state $\mathbf{q}_0$.
Semi-static policies can be applied, for instance, to resource-constrained wireless networks such as WSN, where updating the network allocation over time is difficult due to the availability of a single radio interface and unreliable feedback channels. 
On the other hand, a dynamic scheduling policy, denoted by $\pi_D(\mathbf{q}_k)$, relies on the 
availability of the queue state $\mathbf{q}_k$ at a centralized network logic, which is used to determine the allocation of slots for the next frame, i.e. at $k$-th frame $n^1_{k}= \pi_D(\mathbf{q}_k)$.
An exemplary application of dynamic policies is in cellular networks, where reliable feedback channels can timely deliver new queue states to the network coordinator and resource allocations to the devices.} 

Given the end-to-end deadline $w$, we define the Delay Violation Probability (DVP) of a sequence of time-critical packets that arrived in frame $0$ as the probability that one or more packets of the sequence do not depart the second queue by the end of frame $w$. 
For initial backlogs $x_1$, $x_2$ this is denoted by $\text{DVP}\left(w,y,x_1,x_2\right)$ and is given by
\begin{align}\label{def:DVP}
\text{DVP}\left(w,y,x_1,x_2\right) &\coloneqq \P\left\lbrace W(1) > w\right\rbrace.
\end{align}
The above equivalence is obtained using Eq.~\eqref{eq:virtual-delay}, where the event $\{W(1) > w\}$ implies that the cumulative departures by the end of frame $w$ are smaller than the total number of packets in frame $0$. 
Note that DVP could potentially be used as QoS in networked feedback systems; for example, given a deadline of $w$ frames, DVP represents the probability that a packet (carrying control command) in response to a packet generated by the sensor is delivered to the actuator within the deadline.

\note{We are interested in finding semi-static and dynamic scheduling policies that minimize the DVP of packets belonging to interactive applications.
The policies are obtained by formulating and solving the following optimization problems. Let $\Pi_S$ and $\Pi_D$ denote the sets of all possible semi-static and dynamic scheduling policies\footnote{\note{These sets are non-empty; an example policy allocates slots equally to both links in all frames.}}.
Given $y$ application packets arrived in frame $0$, an optimal semi-static policy $\pi_S$ is obtained solving
\begin{align}
	\underset{\pi_S \in \Pi_S}{\text{minimize}} \quad \text{DVP}\left(w,y,x_1,x_2\right).\label{eq:opt-offline}
\end{align}
A dynamic scheduling policy $\pi_D$ is obtained solving
\begin{align}
	\underset{\pi_D \in \Pi_D}{\text{minimize}} \quad \text{DVP}\left(w,y,x_1,x_2\right).\label{eq:opt-online}
\end{align}
In Eq.~\eqref{eq:opt-offline} and Eq.~\eqref{eq:opt-online}, $\Pi_S, \Pi_D$ denote the sets of all possible semi-static and dynamic scheduling policies, and are non-empty as each resulting slot allocation is valid.

In the sequel, we will be using the following definitions.
Given a set of events $E_1, E_2, \dots$ the \emph{union bound} is given by
\begin{align*}
    \P\left\lbrace\bigcup_{i}^{} E_i\right\rbrace \leq \P\left\lbrace \sum_{i}^{} E_i \right\rbrace.
\end{align*}
Furthermore, given a random variable $X$, for every $t>0$, the \emph{Chernoff bound} is given by
\begin{align*}
    \P\left\lbrace X \geq x \right\rbrace \leq \frac{\E[e^{tX}]}{e^{tx}},
\end{align*}
where $\E[\cdot]$ denotes the expectation operator. 

}
\section{Derivation of Delay Violation Probability}\label{sec:dvp-heuristic}
We characterize DVP using Stochastic Network Calculus (SNC)~\cite{Jiang:2006:BSN:1151659.1159929}. 
From Eq.~\eqref{eq:virtual-delay}, DVP can be obtained in terms of the virtual delay of the network
\begin{align}
\text{DVP}\left(w,y,x_1,x_2\right) &= \P\left\lbrace W(1) > w\right\rbrace \nonumber\\
&\,= \P\left\lbrace\D(w) < y + x_1 + x_2\right\rbrace. \label{eq:dvp-def}
\end{align}

Applying the input-output relation for a queue with a dynamic server
\begin{equation}\label{eq:dynamic-server}
\D(k) \geq \min_{0 \leq u \leq k} \left[\A(u)+\S(k-u)\right],
\end{equation} 
we can derive the exact expression of DVP.
\begin{proposition}\label{prop1}
The delay violation probability (DVP) of a time critical arrival of $y$ packets at $k=0$, given initial queue backlogs $x_1,x_2$ is 
\begin{IEEEeqnarray}{rCl}
	\IEEEeqnarraymulticol{3}{l}{
		\text{\emph{DVP}}(w,y,x_1,x_2) = 
	}\nonumber\\* \quad
	&& \P\Big\{ \!\left\{\S^2(w) \! < \!y + x_1 + x_2\right\} \cup \left\{\S^2(w-1) \! < \! y + x_1\right\} \cup \nonumber\\
	&&\bigcup_{u=2}^{w} \left\{\S^2(w-u) + \S^1(u-1) < y + x_1\right\}\Big\} \label{eq:exact-dvp2q}.
\end{IEEEeqnarray}
\end{proposition}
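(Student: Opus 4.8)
The plan is to start from the identity $\text{DVP}(w,y,x_1,x_2)=\P\{D(w)<y+x_1+x_2\}$ of Eq.~\eqref{eq:dvp-def} and to rewrite $D(w)=D^2(w)$ entirely in terms of the two cumulative service processes by applying the dynamic-server input-output relation~\eqref{eq:dynamic-server} \emph{twice}: once to the first (upstream) queue and once to the second (downstream) queue. Since each wireless link transmits whenever it holds packets and has been assigned slots, both queues are work-conserving FIFO servers, so~\eqref{eq:dynamic-server} holds with equality; this is what turns the bound into the \emph{exact} expression claimed.

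First I would resolve the first queue. Because the whole batch of $y+x_1$ packets is backlogged from the start (Eq.~\eqref{eq:cumulative-arrivals-q1}), the minimum over cut points in~\eqref{eq:dynamic-server} for queue~$1$ is attained at $u=0$ or $u=k$, so it collapses to $D^1(k)=\min\{S^1(k),\,y+x_1\}$; in particular $D^1(0)=0$. Next I would apply~\eqref{eq:dynamic-server} to queue~$2$ with its arrival process $A^2(u)=D^1(u-1)+x_2$ from Eq.~\eqref{eq:cumulative-arrivals-q2}, substitute the closed form for $D^1(u-1)$, and use $\min\{a,b\}+c=\min\{a+c,b+c\}$ to split the resulting nested minimum. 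Over cut points $u\in\{0,1,\dots,w\}$ this yields three groups of terms: the single term $S^2(w)$ at $u=0$; the group $(y+x_1)+x_2+\big(S^2(w)-S^2(u)\big)$ from the $y+x_1$ branch; and the group $S^1(u-1)+x_2+\big(S^2(w)-S^2(u)\big)$ from the $S^1(u-1)$ branch. The second group is minimized at $u=w$, where it equals the threshold $y+x_1+x_2$; it therefore never falls strictly below the threshold and contributes nothing to the event $\{D^2(w)<y+x_1+x_2\}$.

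It then remains to read $\{D^2(w)<y+x_1+x_2\}$ off as the union of the strict-inequality events of the surviving terms. The $u=0$ term gives $\{S^2(w)<y+x_1+x_2\}$. The $u=1$ term, using $D^1(0)=0$, gives $\{x_2+(S^2(w)-S^2(1))<y+x_1+x_2\}$, i.e. $\{S^2(w-1)<y+x_1\}$, the service of link~$2$ here accumulating over $w-1$ frames. The cut points $u=2,\dots,w$ give $\{S^1(u-1)+x_2+(S^2(w)-S^2(u))<y+x_1+x_2\}$, i.e. $\{S^1(u-1)+S^2(w-u)<y+x_1\}$, with link~$1$ serving over $u-1$ frames and link~$2$ over the complementary $w-u$ frames --- the one missing frame being exactly the one-step relay delay of Eq.~\eqref{eq:cumulative-arrivals-q2} --- and each such sum is Binomial as in Eq.~\eqref{eq:cumulative-service}. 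Taking the union over all cut points gives Eq.~\eqref{eq:exact-dvp2q}.

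The step I expect to be the main obstacle is the index bookkeeping combined with the tightness argument: one must carry the one-step relay delay and the frame-$0$ convention $A^i(0)=0$ through the double application of~\eqref{eq:dynamic-server} carefully enough that the surviving service windows have exactly $u-1$ and $w-u$ frames, and one must argue that~\eqref{eq:dynamic-server} is met with equality for these work-conserving FIFO links, since only then is the union-of-events formula an identity rather than merely an upper bound on the DVP. Distributing the double minimum and discarding the redundant second group is elementary min-plus algebra, but it has to be done without losing or double-counting a cut point.
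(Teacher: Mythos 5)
Your proposal is correct and follows essentially the same route as the paper's proof: a double application of the input--output relation \eqref{eq:dynamic-server} to the two queues, conversion of the resulting minimum-below-threshold event into a union over cut points, and elimination of the cut points whose terms can never fall below the threshold (you merely resolve queue 1 into the closed form $\D^1(k)=\min\{\S^1(k),\,y+x_1\}$ before substituting, whereas the paper expands the nested minimum afterwards). Your explicit remark that \eqref{eq:dynamic-server} must hold with equality for work-conserving FIFO servers in order to obtain an exact identity rather than an upper bound is a point the paper's proof uses implicitly without stating.
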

\begin{proof}
The proof can be found in Appendix~\ref{app:dvp-derivation}
\end{proof}

From Eq.~\eqref{eq:exact-dvp2q}, we observe that the calculation of DVP is highly non-trivial.
The DVP computation requires the knowledge of future, i.e. both the allocations $n^1_k$ and the resulting queue states, in order to calculate the cumulative services.
Thus, it is impossible to use DVP to obtain a scheduling policy which causally allocates the time slots \note{at each frame within the deadline} using only the past information.
Furthermore, computing the exact value of DVP is not tractable as it requires the calculation of the probability of union of $w$ events that are not mutually disjoint.

\note{In this work, we address this issue by deriving upper bounds for DVP which are then used to design semi-static and dynamic schedulers. 
In Sec.~\ref{sec:offline-scheduling}, we investigate \emph{semi-static} scheduling policies 
based on two upper bounds of Eq.~\eqref{eq:exact-dvp2q} to determine the allocation of slots until the deadline solely relying on the initial queue states.
Then, in Sec.~\ref{sec:online-scheduling}, a \emph{dynamic} scheduling policy is derived based on another upper bound for DVP that reallocates the network resources according to the changes in the queue states.}

\note{\section{Semi-static Scheduling Policies}\label{sec:offline-scheduling}
In this section, we derive an upper bound for DVP, referred to as DVPUB, and formulate an upper bound minimization problem, which is then used to compute the proposed semi-static policies. 

Using the union bound for DVP in~\eqref{eq:exact-dvp2q}, we obtain DVPUB, given by
\begin{align}
		&\text{DVPUB}(w,y,x_1,x_2) = 
	\P\left\{\S^2(1+w) \! < \! y + x_1 + x_2\right\} + \nonumber\\
	& \sum_{u=1}^{1+w} \P\left\{\S^2(1+w-u) + \S^1(u-1) < y + x_1\right\} \nonumber\\
	& = \P\left\{\S^2(1+w) \! \leq \! y + x_1 + x_2 - 1\right\} + \nonumber\\
	& \sum_{u=1}^{1+w} \P\left\{\S^2(1+w-u) + \S^1(u-1) 
	 \leq\! y + x_1 -1\right\}.\label{eq:dvpub}
\end{align}

Applying Eq.~\eqref{eq:cumulative-service} to Eq.~\eqref{eq:dvpub}, the DVPUB resulting from the allocation of $\mathbf{n}^1 = \{n^1_0, \dots, n^1_w-1\}$ slots at the second transmitter and $\mathbf{n}^2=N-\mathbf{n}^1$ slots at the first one is given by
\begin{align}
		&\text{DVPUB}(w,y,x_1,x_2,\mathbf{n}^2) =
	\sum_{x=0}^{y+x_1+x_2-1}\P\left\{\sum_{i=0}^{w} b_i^2\left(n_i^2\right) = x\right\}\nonumber\\
	&+\sum_{u=1}^{1+w} \sum_{z=0}^{y+x_1-1} \P\left\{\sum_{j=0}^{w-u} b_j^2\left(n_j^2\right)+\sum_{k=0}^{u-2}b_k^1 \left( n_k^1\right) = z \right\} \nonumber\\
	&\symtext{(a)}{=}\sum_{x=0}^{y+x_1+x_2-1} \left(\frac{1}{p_e}-1\right)^x {\sum_{i=0}^{w}n_i^2\choose x} p_e^{\sum_{i=0}^{w}n_i^2}\notag\\
	&+\sum_{u=1}^{1+w} \sum_{z=0}^{y+x_1-1} \left(\frac{1}{p_e}-1\right)^z  p_e^{\sum_{j=0}^{w-u}n_j^2-\sum_{k=0}^{u-2}n_k^2+(u-1)N} \cdot \nonumber\\
	&{\sum_{j=0}^{w-u}n_i^2-\sum_{k=0}^{u-2}n_j^2+(u-1)N\choose z}.\label{eq:dvpub-w-server}
\end{align}
In step (a), we used the fact that $S^i(k)$ is distributed as a Binomial r.v. as shown in Eq.~\eqref{eq:cumulative-service}. 

We are interested in minimizing the DVPUB to find semi-static scheduling policies.
However, this is highly non trivial for different reasons. 
Minimizing DVPUB is a combinatorial problem and finding a heuristic solution by relaxing the domain of $\mathbf{n}^2$ is challenging as DVPUB consists of the sum of several binomial coefficients.
Thus, it is highly non trivial to study its convexity.
Therefore, following a similar approach as in~\cite{Champati2020TransientRouting} a looser convex bound, referred to as Wireless Transient Bound (WTB), is obtained
by applying the \emph{Chernoff bound} to Eq.~\eqref{eq:dvpub}.
\begin{align}
		&\text{WTB}(w,y,x_1,x_2) =
	\min_{s>0} \Big\{\E\left[e^{-s \, \S^2(1+w)}\right] e^{s\,(y + x_1 +x_2 -1)} + \nonumber\\
	&\quad\sum_{u=1}^{1+w} \E\left[e^{-s \left[ \S^1(u-1)+\S^2(1+w-u)\right]}\right] e^{s\,(y + x_1 -1)}\Big\}.\label{eq:wtb}
\end{align}

The calculation of WTB for a wireless transmitter is obtained by computing the Mellin transform of the cumulative service of Eq.~\eqref{eq:cumulative-service}, given by
\begin{align}
	\mathbb{E}\left[e^{-s \, \S(m,\mathbf{n})}\right] &= \mathbb{E}\left[e^{-s \, \sum_{i=0}^{-1+\sum_{j=0}^{m-1} n_j} b_i}\right] \nonumber\\
	&=\mathbb{E}\left[\left(e^{-s \, b_i}\right)^{\sum_{i=0}^{m-1} n_i}\right]\nonumber\\
	&= \left[(1-p_e)e^{-s}+p_e\right]^{\sum_{i=0}^{m-1} n_i},\label{eq:mellin-dynamic-server} 
\end{align}
where $b_i$ are i.i.d. Bernoulli random variables for transmission outcomes.
Finally, combining Eq.~\eqref{eq:wtb} and~\eqref{eq:mellin-dynamic-server}, we obtain
\begin{align}
		&\text{WTB}(w,y,x_1,x_2,\mathbf{n}^2) = \nonumber\\
	&\min_{s>0} \Big\{ \left[(1-p_e) e^{-s}+p_e \right]^{\sum_{i=0}^{w} n_i^2} e^{s\,(y + x_1 +x_2 -1)} + \nonumber\\
	&\sum_{u=1}^{1+w} \!\left[(1\!-\!p_e) e^{-s}\!\!+p_e\right]^{(u-1)N-\sum_{j=0}^{u-2} n_j^2+\sum_{i=0}^{w-u} n_i^2} \!e^{s\,(y + x_1 -1)}\Big\}.\label{eq:wtb-w-server}
\end{align}

Given the initial queue backlogs $\mathbf{q}_0 = \left\lbrace x_1,x_2\right\rbrace$, we aim to solve the upper bound minimization problem below
\begin{equation}
\argmin_{\mathbf{n}^2 \in \left\{1,\dots,N-1\right\}^w}\text{WTB}(w,y,x_1,x_2,\mathbf{n}^2) .\label{eq:wtb-opt}
\end{equation}
To solve the integer-programming problem \eqref{eq:wtb-opt}, one may employ exact algorithms (such as \textit{branch-and-bound}), which however has run-time that scales exponentially with $N$ and $w$. Instead, we relax the integer constraints, show that the relaxed problem is convex, and use different methods to round the  continuous values of the solution and obtain multiple heuristics for the minimization of DVPUB.

Let $\widetilde{\pi}^\star_S$ denote the optimal solution for the relaxed problem of~\eqref{eq:wtb-opt} which is given by
\begin{equation}
\widetilde{\pi}^\star_S \, = \argmin_{\mathbf{n}^2 \in \left[1,N-1\right]^w}\text{WTB}(w,y,x_1,x_2,\mathbf{n}^2).\label{eq:wtb-opt-relaxed}
\end{equation}
\begin{theorem}\label{th1}
	The optimization problem in~\eqref{eq:wtb-opt-relaxed} is convex.
\end{theorem}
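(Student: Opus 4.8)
The plan is to reduce the claim to two elementary facts: the feasible set $[1,N-1]^w$ is a box, hence convex, so only the objective needs to be examined; and the objective is convex in $\mathbf{n}^2$ because, after the Chernoff/Mellin step, it is built from exponentials of affine functions of the slot counts. Concretely, I would first fix the parameter $s>0$ and look at the bracketed expression that is minimized over $s$ in~\eqref{eq:wtb-w-server}. Since $p_e\in(0,1)$ and $s>0$, the base $g(s)\triangleq(1-p_e)e^{-s}+p_e$ lies strictly between $0$ and $1$, so $\ln g(s)$ is a well-defined \emph{negative} constant with respect to $\mathbf{n}^2$. Each summand has the form $e^{s\beta}g(s)^{\ell(\mathbf{n}^2)}=\exp\!\big(\ell(\mathbf{n}^2)\ln g(s)+s\beta\big)$, where $\beta$ is a constant ($y+x_1+x_2-1$ for the leading term, $y+x_1-1$ for each term of the sum) and the exponent $\ell(\mathbf{n}^2)$ --- namely $\sum_{i=0}^{w}n_i^2$ for the leading term and $(u-1)N-\sum_{j=0}^{u-2}n_j^2+\sum_{i=0}^{w-u}n_i^2$ for the $u$-th term --- is an \emph{affine} function of $\mathbf{n}^2$. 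An exponential of an affine function is convex, and a finite sum of convex functions is convex; hence the bracket is convex in $\mathbf{n}^2$ for every fixed $s>0$.

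It then remains to pass from ``convex in $\mathbf{n}^2$ for every fixed $s$'' to convexity of $\text{WTB}(w,y,x_1,x_2,\mathbf{n}^2)=\min_{s>0}(\cdot)$ itself; this is the delicate point, since a pointwise minimum of convex functions need not be convex. I would handle it by exchanging the order of minimization in~\eqref{eq:wtb-opt-relaxed}: $\min_{\mathbf{n}^2\in[1,N-1]^w}\min_{s>0}(\cdot)=\min_{s>0}\big(\min_{\mathbf{n}^2\in[1,N-1]^w}(\cdot)\big)$. For each fixed $s$ the inner problem minimizes a convex function over the convex box $[1,N-1]^w$ and is therefore a convex program; the outer minimization then collapses to a one-dimensional line search over $s$. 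This is exactly the sense in which the relaxed problem is tractable and the rounding heuristics described after~\eqref{eq:wtb-opt} become applicable. A tempting alternative would be to prove \emph{joint} convexity of the bracket in $(\mathbf{n}^2,s)$ on $[1,N-1]^w\times(0,\infty)$ and invoke the fact that partial minimization of a jointly convex function over a convex set preserves convexity.

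The main obstacle is precisely that last, joint route: the dependence on $\mathbf{n}^2$ and $s$ enters each summand only through the product $\ell(\mathbf{n}^2)\ln g(s)$, which is bilinear, so joint convexity in $(\mathbf{n}^2,s)$ does \emph{not} follow from the componentwise observations above and in fact can fail for a single summand --- its Hessian has principal minors of mixed sign for generic $p_e$. I would therefore base the proof on the fixed-$s$ convexity together with the order-exchange above rather than on joint convexity. If a strictly joint statement were needed, I would try to exploit the restricted ranges of $\beta$ relative to $\ell(\mathbf{n}^2)$ imposed by $n_i^2\in[1,N-1]$ and by the bounded packet counts $y,x_1,x_2$, or to restrict $s$ to an interval on which each summand's Hessian is positive semidefinite; but I expect the fixed-$s$ argument plus the minimization-order exchange to be the clean and sufficient path, in the spirit of the SNC treatment in~\cite{Champati2020TransientRouting}.
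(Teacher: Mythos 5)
Your fixed-$s$ argument --- each summand of the bracket in~\eqref{eq:wtb-w-server} is $e^{s\beta}$ times $\alpha^{\ell(\mathbf{n}^2)}$ with $\alpha=(1-p_e)e^{-s}+p_e\in(0,1)$ and $\ell$ affine, hence convex in $\mathbf{n}^2$ --- is exactly step (1) of the paper's proof in Appendix~\ref{app:wtb2q-convexity}. Where you and the paper part ways is the step you single out as delicate: the outer $\min_{s>0}$. The paper disposes of it by asserting (step (2) of~\eqref{eq:wtb2q-d-proof}) that ``the minimum of the sum is smaller or equal of the sum of the minima,'' i.e.
\begin{equation*}
\min_{s>0}\bigl[\lambda A(s)+(1-\lambda)B(s)\bigr]\;\le\;\lambda\min_{s>0}A(s)+(1-\lambda)\min_{s>0}B(s),
\end{equation*}
which is the reverse of the true inequality: the minimum of a sum is \emph{at least} the sum of the minima, and the claimed direction fails whenever $A$ and $B$ attain their minima at different $s$, which is the generic situation here since the optimal Chernoff parameter depends on $\mathbf{n}^2$. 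So your instinct that pointwise minimization over $s$ does not preserve convexity, and that joint convexity in $(\mathbf{n}^2,s)$ is unavailable because the exponent $\ell(\mathbf{n}^2)\ln\alpha(s)$ is bilinear, identifies a genuine gap in the paper's own argument rather than a difficulty you merely failed to overcome.

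That said, your workaround does not prove the theorem as literally stated either. Exchanging $\min_{\mathbf{n}^2}$ and $\min_{s}$ shows that the \emph{two-level} problem can be solved as a one-dimensional search over $s$ with a convex inner program for each fixed $s$ --- a sound and arguably preferable algorithmic route --- but it does not establish that the partially minimized function $\mathbf{n}^2\mapsto\mathrm{WTB}(w,y,x_1,x_2,\mathbf{n}^2)$ is convex, which is what Theorem~\ref{th1} asserts and what licenses running an SQP solver on~\eqref{eq:wtb-opt-relaxed} in the variable $\mathbf{n}^2$ alone. To make your route serve the paper's purpose you would either restate the claim for the joint problem in $(\mathbf{n}^2,s)$, solved parametrically in $s$, or supply the missing convexity of the $s$-minimized function by some other argument; as written, neither your proposal nor the paper's proof closes that final step.
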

\begin{proof}
The proof is given in Appendix~\ref{app:wtb2q-convexity}.
\end{proof}
Thanks to Theorem~\ref{th1}, well-known convex optimization algorithms, such as the subgradient or interior-point can be used, which provide scalability for an increasing number of slots $N$ and frames within the deadline $w$.
In this work,~\eqref{eq:wtb-opt-relaxed} is solved using the nonlinear programming solver \emph{fmincon} available in Matlab\texttrademark~employing the sequential quadratic programming (SQP) algorithm.
Once $\widetilde{\pi}^\star_S$ is found, a conversion to the integer domain is needed in order to determine a feasible solution, which we refer by $\hat{\pi}^\star_S$.
Although the optimal selection of an integer solution would require the exploration of the entire problem's domain, heuristic methods can be applied to find a solution in the neighbourhood of $\widetilde{\pi}^\star_S$.
To this end, we investigate different neighbour search methods in order to achieve near-optimal performances.

The simplest way to derive $\hat{\pi}^*_{S}$ is to round each frame allocation of $\widetilde{\pi}^\star_S$ to its closest integer value.
We refer to this method as WTB-R.
Alternatively, a heuristic policy can be found as follows.
For each frame allocation $n^1_i \in \widetilde{\pi}^\star_S$, two integer values are derived applying the floor and ceiling functions.
A search space is constructed by computing all the combinations of the integer values for each frame until the deadline which leads to a total of $2^w$ combinations.
As final step, Eq.~\eqref{eq:dvpub-w-server} and \eqref{eq:wtb-w-server} are used to evaluate each combination and identify the best one.
The semi-static policies corresponding to the evaluation of Eq.~\eqref{eq:dvpub-w-server} and \eqref{eq:wtb-w-server} over this search space are referred to as WTB-D and WTB-W, respectively.

The performances of the different heuristics have been evaluated via extensive simulations over a broad range of values for each system parameter, i.e. for $x_1, x_2$~$\in$ ~$\left\{0,1,2\right\}$, $w$~$\in$~$\left\{2,3,4,5,6\right\}$, $N$~$\in$~$\left\{2,3,4,5\right\}$, and $p_e$~$\in$~$\left\{0.2,0.33,0.4,0.5\right\}$.
For performance comparison, two additional schedulers have been evaluated, eDVPUB and eWTB, which exhaustively explore the problem's domain to find the policies that respectively achieve the minimum values of DVPUB, cf. Eq.~\eqref{eq:dvpub-w-server}, and WTB, cf. Eq.~\eqref{eq:wtb-w-server}.
Furthermore, the performance of each scheduler is compared with the performance of the policy that achieves the minimum DVP.
This optimal policy is found by exploring the DVP of all possible policies via simulations and then, for each policy, simulations are used to compute its DVP.

Fig.~\ref{fig:1-boxplot-offline} shows the performance of the proposed semi-static schedulers by computing, for each scheduler and system configuration, the percentage of feasible policies that achieve higher DVP.
As expected, the exhaustive search methods eWTB and eDVPUB achieve the best results, with eDVPUB finding, in the large majority of cases, the top $10\%$ semi-static policies.
The performance gap between eWTB and eDVPUB is introduced by the Chernoff bound in Eq.~\eqref{eq:mellin-dynamic-server}.
These methods, however, do not represent a feasible way of computing policies as they require the exhaustive exploration of the entire problem domain.
Differently, the heuristic methods WTB-R/W/D can be used to efficiently find semi-static policies for arbitrary parameters.
The low complex WTB-R achieves the worst performance.
WTB-W and WTB-D achieve similar performances and show a small performance penalty with respect to the exhaustive methods, with WTB-W achieving better performances.
This can be explained by the fact that, in WTB-D, Eq.~\eqref{eq:wtb-w-server} is used to solve the relaxed problem, while Eq.~\eqref{eq:dvpub-w-server} to find the integer solution.
For this reason, only WTB-W is presented in the numerical section.
}

\begin{figure}[!t]
	\centering
	\includegraphics{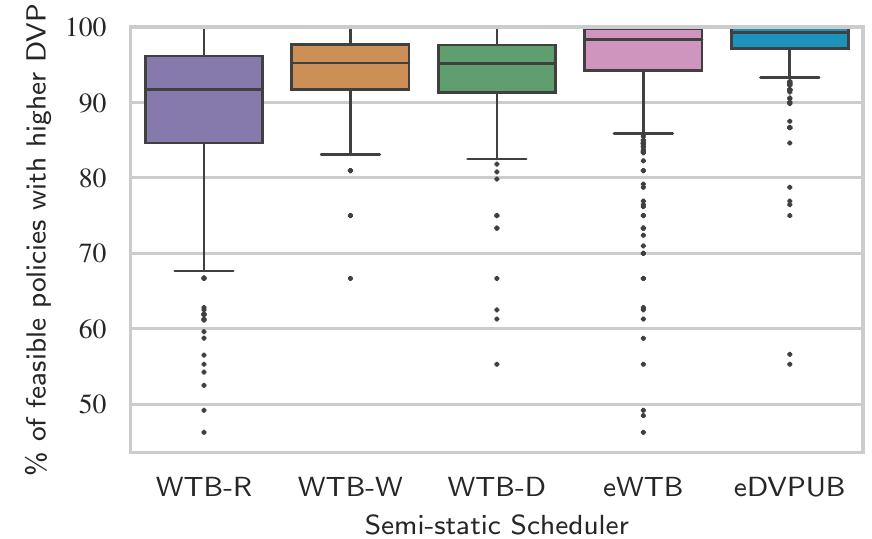}
	\caption{\note{Performance of the proposed semi-static schedulers for different system parameters. For each scheduler, the percentage of existing policies that achieve higher DVP is evaluated.}}
	\label{fig:1-boxplot-offline}
\end{figure}
\section{Dynamic Scheduling Policy}\label{sec:online-scheduling}
In order to compute the DVP of time-critical packets, cf. Eq.~\eqref{eq:exact-dvp2q}, the knowledge of future allocation and queue states is needed.
Thus, it is impossible to use DVP to obtain a dynamic scheduling policy which causally allocates the time slots using only the past information. 
To address this problem, we obtain an upper bound for DVP using Markov's inequality: 
\note{
\begin{align}
		\text{DVP}(w,y,x_1,x_2) 
	&=  \P\{\D(w) < y + x_1 + x_2\} \nonumber\\
	&=  \P\{\D(w) \leq y + x_1 + x_2 - 1\} \nonumber\\
	&=  \P\{1/\D(w) \geq 1/(y + x_1 + x_2 - 1)\} \nonumber\\
	&\leq (y + x_1 + x_2)\E[1/\D(w)]. \label{eq:dvp-heuristic-2}
\end{align}
From Eq.~\eqref{eq:dvp-heuristic-2} we infer that minimizing the expectation of the inverse of the cumulative departures (throughput) of the network minimizes the upper bound of the DVP and thus potentially reduces DVP. 
Using this insight, in the following, we compute a heuristic schedule by solving the expected throughput maximization problem stated below:

\begin{align}\label{eq:throughput-max-problem}
\underset{\pi_D \in \Pi_D}{\text{maximize}} \quad \E[\D(w)]= \sum_{k=0}^{w-1} \E\left[d^i_k\right] .
\end{align}
}

In order to solve the optimization problem in Eq.~\eqref{eq:throughput-max-problem}, we formulate a discrete-time, finite-horizon MDP.
We use $\mathbf{q}_k$ to denote the state of the system and $n^\text{1}_k$ to denote the action in frame $k$. 
The maximum number of slots in a frame is $N$ and therefore $n^\text{1}_k \in \{0,1,\ldots,N\}$. 
Given $n^\text{1}_k$, from (6) we have
\begin{align*}
\mathbb{P}\{s_k^\text{1} = r\} &= \binom{n^\text{1}_k}{r} (1-p_e)^r p_e^{n^\text{1}_k-r}, \\ 
\mathbb{P}\{s_k^\text{2} = r\} &= \binom{N-n^\text{1}_k}{r} (1-p_e)^r p_e^{N-n^\text{1}_k-r}.
\end{align*}
The queues evolve as below:
\begin{align}
q^\text{1}_{k+1} &= \max(q^\text{1}_k -  s_k^\text{1},0), \label{eq:q1}\\
q^\text{2}_{k+1} &= \max(q^\text{2}_k -  s_k^\text{2},0) + \min(q^\text{1}_k,s_k^\text{1}). \label{eq:q2}
\end{align}
Note that the number of departures from the first queue in frame $k$ equals $\min(q^\text{1}_k,s_k^\text{1})$, which are added to the second queue to be served in frame $k+1$.

In the following, we formulate the transition probabilities for the states. Note that the initial backlogs in the queues are $(y+x_1,x_2)$, where $y$ is due to the message of interest. We have $q^\text{1}_0 = y + x_1$ and $q^\text{2}_0 = x_2 $.  We now analyse the set of possible states in our system. In any frame $k$, a feasible state $(q^\text{1}_k,q^\text{2}_k)$ should satisfy the following conditions:
\begin{align}
q^\text{1}_k &\leq q^\text{1}_{k-1}, \label{eq:cond1}\\
q^\text{1}_k + q^\text{2}_k &\leq q^\text{1}_{k-1} + q^\text{2}_{k-1}.\label{eq:cond2}
\end{align}
Conditions Eq.~\eqref{eq:cond1} and Eq.~\eqref{eq:cond2} follow from the fact that we ignore arrivals after the message of interest and in every frame each queue will receive certain service. Note that while the length of the first queue can only decrease as the packets are served, the length of the second queue may increase up to $y+x_1+x_2$ as departures from first queue are added to the second queue. Therefore, for every state $\mathbf{q}_k$  in the state space, say $\mathcal{Q}$, $q^\text{1}_k \in \{0,1,\ldots,y+x_1\}$ and $q^\text{2}_k \in \{0,1,\ldots,y+x_1+x_2\}$. \note{This implies that $\mathcal{Q}$ can contain at most $(y+x_1+1)(y+x_1+x_2+1)$ possible states.}

Consider that in frame $k$,  $q^\text{1}_k = l^1$ and $q^\text{2}_k = l^2$. We would like to present the transition probabilities to the states  $q^\text{1}_{k+1} = l^1_+$ and $q^\text{2}_{k+1} = l^2_+$. We have the following cases.

\textbf{Case 1:}  $l^1_+ > l^1$ or  $l^1_+ + l^2_+ > l^1 + l^2$. From Eq.~\eqref{eq:cond1} and Eq.~\eqref{eq:cond2}, we infer that
\begin{align*}
\mathbb{P}\{q^\text{1}_{k+1} = l^1_+,q^\text{2}_{k+1} = l^2_+ | q^\text{1}_{k} = l^1,q^\text{2}_{k} = l^2\} = 0.
\end{align*}

\textbf{Case 2:}  $0 < l^1_+ \leq l^1$, $0 < l^2_+$, and  $l^1_+ + l^2_+ \leq l^1 + l^2$. In this case $ s_k^\text{1} < q^\text{1}_k = l^1$ and $ s_k^\text{2} < q^\text{2}_k = l^2$. From Eq.~\eqref{eq:q1} we have 
\begin{align*}
q^\text{1}_{k+1} = q^\text{1}_k -  s_k^\text{1} \quad
\Rightarrow  \quad s_k^\text{1}  = l^1 - l^1_+.
\end{align*}
The number of packets served from the second queue are computed from Eq.~\eqref{eq:q2}. 
\begin{align*}
&q^\text{2}_{k+1} = q^\text{2}_k -  s_k^\text{2}  + s_k^\text{1} \quad
\Rightarrow  s_k^\text{2}  = l^2 - l^2_+ + l^1 - l^1_+.
\end{align*}
Therefore,
\begin{align*}
\mathbb{P} \{q^\text{1}_{k+1} &= l^1_+,q^\text{2}_{k+1} = l^2_+ | q^\text{1}_{k} = l^1,q^\text{2}_{k} = l^2\}\\ &= \mathbb{P}\{s_k^\text{1}  = l^1 - l^1_+, s_k^\text{2}  = l^2 - l^2_+ + l^1 - l^1_+\}.
\end{align*}

\textbf{Case 3:} $l^1_+ = 0$, $0 < l^2_+$, and  $l^2_+ \leq l^1 + l^2$. In this case all $l^1$ packets from the first queue are served. This implies $ s_k^\text{1} \geq q^\text{1} = l^1$. Using similar analysis as above, we obtain
\begin{align*}
\mathbb{P}\{q^\text{1}_{k+1} =0,q^\text{2}_{k+1} = l^2_+ | q^\text{1}_{k} = l^1,q^\text{2}_{k} = l^2\}\\ = \mathbb{P}\{s_k^\text{1}  \geq l^1 , s_k^\text{2}  = l^2 - l^2_+ + l^1\}.
\end{align*}

\textbf{Case 4:}  $l^1_+ = l^1$, $l^2_+ = 0$. In this case we have $s_k^\text{1}  = 0$, and all $l^2$ packets from the second queue are served, i.e. $ s_k^\text{2} \geq q^\text{2} = l^2$.  From Eq.~\eqref{eq:q2}, we have
\begin{align*}
\mathbb{P}\{q^\text{1}_{k+1} &=l^1,q^\text{2}_{k+1} = 0 | q^\text{1}_{k} = l^1,q^\text{2}_{k} = l^2\} \\&= \mathbb{P}\{s_k^\text{1}  = 0 , s_k^\text{2}  \geq l^2\}.
\end{align*}

Note that the case $0 \leq  l^1_+ < l^1$ and $l^2_+ = 0$ cannot happen as $l^1 - l^1_+$ packets will be added to the second queue in the current slot. All the above cases are written assuming that $l^1 > 0$ and $l^2 > 0$. If either of them is zero, then the transition probability only involves the probability for service received by the non-empty queue.

Given the initial state  $\mathbf{q}_0 = (y+x_1,x_2)$, we are interested in finding a scheduling policy $\pi^\star_D$ that solves the maximization problem of Eq.~\eqref{eq:throughput-max-problem}.
For this, we define the reward $r_k$ of a policy $\pi_D$ for a given state $\mathbf{q}_k$ as the expected number of departures from the system, i.e. the expected number of packets that are served at the second queue under this policy, and is given by
\begin{align}\label{eq:mdp-reward}
r_k\left(\mathbf{q}_k,\pi_D(\mathbf{q}_k)\right) &= \E\left[d^i_k|\,\pi_D(\mathbf{q}_k)\right] \nonumber\\
&= \E\left[\min\left(q^\text{2}_{k}, s_k^\text{2}\right)|\,\pi_D(\mathbf{q}_k)\right].
\end{align}

The total reward, obtained evaluating Eq.~\eqref{eq:mdp-reward} over a horizon of $w$ frames, is equal to Eq.~\eqref{eq:throughput-max-problem}.
Therefore, the objective of the MDP is equal to the objective of Eq.~\eqref{eq:throughput-max-problem}.

Value iteration algorithms solve the MDP optimization recursively computing a value function $J$ based on the Bellman's equation~\cite{Puterman:1994:MDP:528623}. 
The optimal value function $J(\mathbf{q}_k)$ given a state $\mathbf{q}_k$ is
\begin{align}
J_k(\mathbf{q}_k)=&\,\underset{\pi_D \in \Pi_D}{\text{maximize}}\,\,\,\, r_k(\mathbf{q}_k,\pi_D(\mathbf{q}_k))\,+\nonumber\\ &\sum_{\mathbf{q}_{k+1}\in\mathcal{Q}_{k+1}}\P\{\mathbf{q}_{k+1}|\mathbf{q}_k,\pi_D(\mathbf{q}_k)\}J_{k+1}(\mathbf{q}_{k+1}), \label{eq:J-transition-probability}
\end{align}
where $\mathbb{P}\{\mathbf{q}_{k+1}|\mathbf{q}_k\pi_D(\mathbf{q}_k)\}$
is the transition probability from state $\mathbf{q}_k$  to state $\mathbf{q}_{k+1}$
in one time step using $\pi_D(\mathbf{q}_k)$, $\mathcal{Q}_{k+1}$ denotes the set of all states reachable from $\mathbf{q}_k$ with one time step transition.
By the construct of the MDP, it is easy to see that $\pi_D^\star$ is optimal for Eq.~\eqref{eq:throughput-max-problem} which is stated in the following theorem.
\begin{theorem}\label{th2}
	$\pi_D^\star$ is throughput optimal, i.e.
	\begin{align*}
	\pi_D^\star = \underset{\pi_D \in \Pi_D}{\argmax} \quad \E\left[\D(w)\right].    
	\end{align*}
\end{theorem}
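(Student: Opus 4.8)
The plan is to show that the finite-horizon MDP defined by the state $\mathbf{q}_k$, the action $n^\text{1}_k$, the transition probabilities of Cases~1--4, and the per-frame reward $r_k$ of Eq.~\eqref{eq:mdp-reward} has its optimal value function at the initial state $\mathbf{q}_0 = (y+x_1,x_2)$ equal to $\max_{\pi_D\in\Pi_D}\E[\D(w)]$, and that the greedy policy $\pi^\star_D$ extracted from the Bellman recursion attains it. The argument has two ingredients: (i) the MDP objective coincides with the throughput objective of Eq.~\eqref{eq:throughput-max-problem}, and (ii) backward induction on the Bellman equation yields a Markov deterministic policy that is optimal within the class $\Pi_D$.

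First I would establish the additive decomposition. By linearity of expectation, $\E[\D(w)] = \E\big[\sum_{k=0}^{w-1} d^\text{2}_k\big] = \sum_{k=0}^{w-1}\E[d^\text{2}_k]$, where $d^\text{2}_k = \min(q^\text{2}_k, s^\text{2}_k)$ is the number of packets leaving the second queue in frame $k$. Conditioned on $(\mathbf{q}_k, n^\text{1}_k)$, the service $s^\text{2}_k$ is Binomial with parameters $(N-n^\text{1}_k, 1-p_e)$ and $q^\text{2}_k$ is the current state component, so the conditional expected departure is exactly $r_k(\mathbf{q}_k, n^\text{1}_k) = \E[\min(q^\text{2}_k,s^\text{2}_k)\mid \pi_D(\mathbf{q}_k)]$ of Eq.~\eqref{eq:mdp-reward}. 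Hence, taking expectations and summing over $k$, the total expected reward collected by any policy $\pi_D$ equals its $\E[\D(w)]$, so maximizing the MDP's total reward is the same problem as Eq.~\eqref{eq:throughput-max-problem}.

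Second I would invoke standard finite-horizon dynamic programming~\cite{Puterman:1994:MDP:528623}: with terminal condition $J_w \equiv 0$, the recursion in Eq.~\eqref{eq:J-transition-probability} computes, by backward induction on $k$, the optimal reward-to-go $J_k(\mathbf{q})$, and any action attaining the maximum at each $(\mathbf{q},k)$ defines an optimal Markov deterministic policy $\pi^\star_D$. The point requiring care is the Markov property of the controlled chain: given $\mathbf{q}_k$ and $n^\text{1}_k$, the next state $\mathbf{q}_{k+1}$ is a deterministic function, via Eq.~\eqref{eq:q1}--\eqref{eq:q2}, of the service outcomes $(s^\text{1}_k,s^\text{2}_k)$, which are independent of the past by the i.i.d.\ Bernoulli (frequency-hopping) model and whose law depends only on $n^\text{1}_k$. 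Thus $\mathbf{q}_k$ is a sufficient statistic and nothing is lost by restricting to state-feedback policies — precisely the class $\Pi_D$ as defined in the model, where $n^\text{1}_k = \pi_D(\mathbf{q}_k)$.

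Combining the two: for every $\pi_D \in \Pi_D$ we get $\E[\D(w)] = \E_{\pi_D}\!\big[\sum_{k=0}^{w-1} r_k\big] \le J_0(\mathbf{q}_0)$, with equality for $\pi^\star_D$, so $\pi^\star_D \in \argmax_{\pi_D\in\Pi_D}\E[\D(w)]$. The main obstacle I anticipate is the bookkeeping that verifies the identity $r_k(\mathbf{q}_k,n^\text{1}_k) = \E[d^\text{2}_k\mid \mathbf{q}_k,n^\text{1}_k]$ uniformly across the boundary situations — the empty-queue cases $q^\text{1}_k = 0$ or $q^\text{2}_k = 0$, and the Case~1--4 transition structure — so that the reward is neither double-counted nor dropped; once that identity holds in all cases, throughput optimality of $\pi^\star_D$ follows immediately from the optimality of the Bellman solution.
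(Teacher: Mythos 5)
Your proposal is correct and follows essentially the same route as the paper, which offers no formal proof beyond observing that the total MDP reward over the horizon equals the objective of Eq.~\eqref{eq:throughput-max-problem} and that optimality then follows ``by the construct of the MDP.'' Your two ingredients---the per-frame identity $r_k(\mathbf{q}_k,n^\text{1}_k)=\E[d^\text{2}_k\mid\mathbf{q}_k,n^\text{1}_k]$ summed via linearity of expectation, plus standard backward induction on the Bellman recursion---are exactly the details the paper leaves implicit.
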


\note{For a finite number of states and actions, the optimal policy $\pi_D^\star$ for the MDP can be found by computing the optimal value function in Eq.~\eqref{eq:J-transition-probability} by backward recursion (cf.~\cite{Puterman:1994:MDP:528623}). 
The calculation of the optimal dynamic scheduling policy satisfying Theorem~\ref{th2} can be performed using the finite-horizon value iteration algorithm.
For each epoch $k$ until the deadline $w$ and state $\mathbf{q}_{k}$, the value function in Eq.~\eqref{eq:J-transition-probability} is computed for all possible actions. 
Therefore, computing the optimal dynamic policy requires $\mathcal{O}\left(\left(N+1\right)|\mathcal{Q}|w\right)$ operations. We note that optimal actions for all the states in $\mathcal{Q}$ can be computed a priori and stored in a table, and for a queue state observed in a frame the corresponding optimal action can be retrieved from the table.}
\section{Performance Evaluation of Semi-static and Dynamic Scheduling Policies}\label{sec:results}
In this section, the performance of semi-static and dynamic scheduling policies are evaluated numerically.
The evaluations leverage C, Matlab\texttrademark~and python to compute the performance of, respectively, semi-static and dynamic policies and comparison baselines.
We evaluate the schemes for variations in several main system parameters: (1) Different initial queue backlogs $x_1, x_2$; (2) Application deadlines $w$; (3) Number of slots per frame $N$; and (4) Average service PERs $p_e$.
In Sec.~\ref{subsec:offline-eval}, we present first a performance comparison of purely semi-static scheduling policies. 
In Sec.~\ref{subsec:online-eval}, we then move to purely dynamic policies.
Finally, in Sec.~\ref{subsec:off-on-eval}, we present results on the performance gap between the proposed semi-static and dynamic scheduling policies. 

\begin{figure*}[!t]
	\centering
	\begin{minipage}{\columnwidth}
		\includegraphics{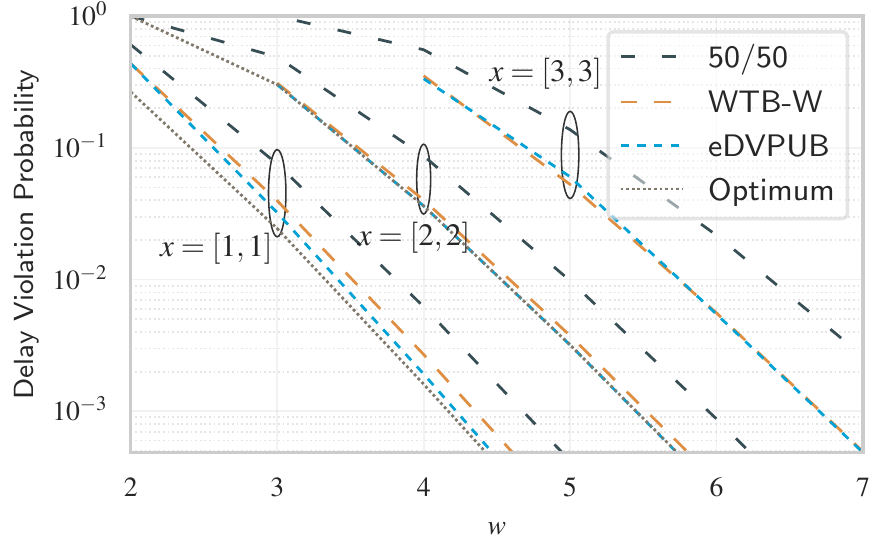}
		\caption{\note{DVP achieved by semi-static schedulers for different deadlines $w$, increasing backlogs $x_1, x_2$, $N=4$, $p_e=0.2$.}}
		\label{fig:fig_lineplot_offline-high_xsymm_x1s_x2s_pes2_Ns4_ws_algs_X_w__Y_simdvp__Z_x1,1_2,2_3,3_}
	\end{minipage}\hfill
	\begin{minipage}{\columnwidth}
		\includegraphics{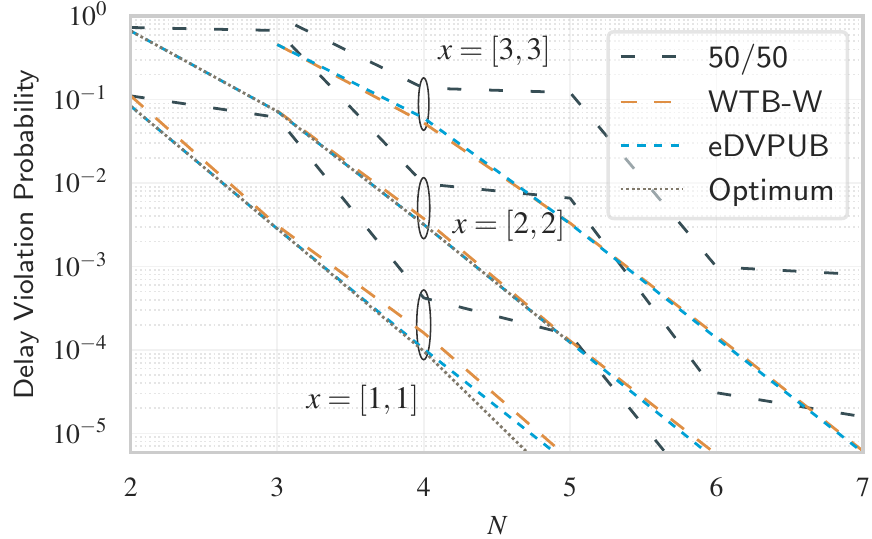}
		\caption{\note{DVP achieved by semi-static schedulers for different frame sizes $N$, increasing backlogs $x_1, x_2$, $w=5$, $p_e=0.2$.}}
		\label{fig:fig_lineplot_offline-high_xsymm_x1s_x2s_pes2_Ns_ws5_algs_X_N__Y_simdvp__Z_x1,1_2,2_3,3_}
	\end{minipage}
\end{figure*}
\begin{figure*}[!t]
	\centering
	\begin{minipage}{\columnwidth}
		\includegraphics{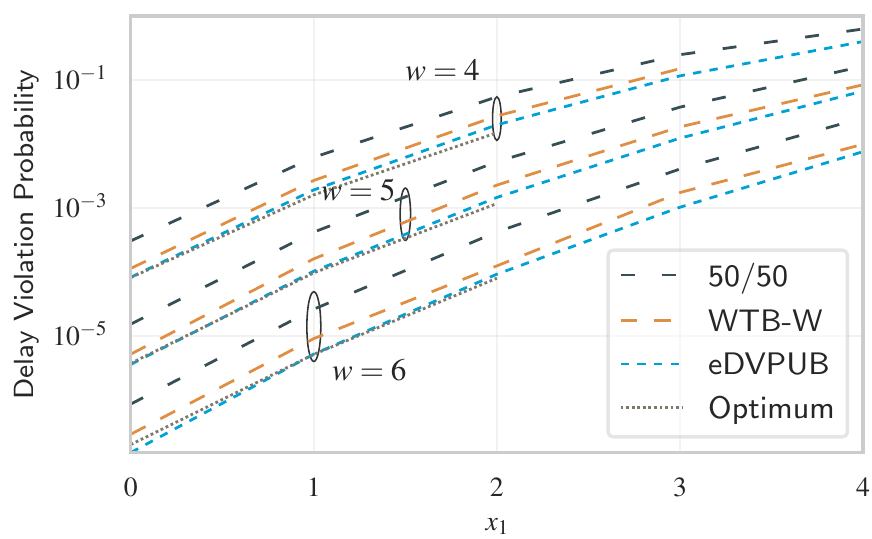}
		\caption{\note{DVP achieved by semi-static schedulers for different backlogs $x_1$ and deadlines $w$, $x_2=1$, $N=4$, $p_e=0.2$.}}
		\label{fig:fig_lineplot_offline-high_x1inc_x1s_x2s_pes2_Ns4_ws_algs_X_x1__Y_simdvp__Z_w4,5,6_}
	\end{minipage}\hfill
	\begin{minipage}{\columnwidth}
		\includegraphics{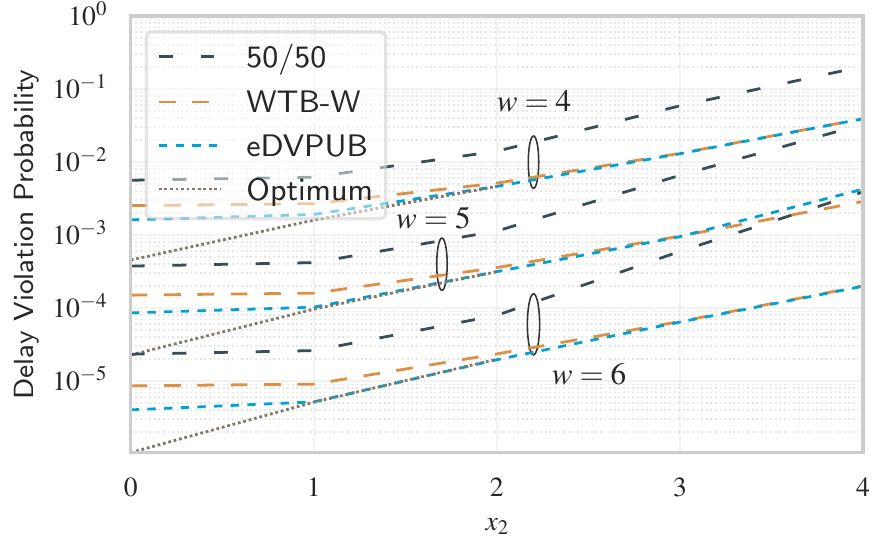}
		\caption{\note{DVP achieved by semi-static schedulers for different backlogs $x_2$ and deadlines $w$, $x_1=1$, $N=4$, $p_e=0.2$.}}
		\label{fig:fig_lineplot_offline-high_x2inc_x1s_x2s_pes2_Ns4_ws_algs_X_x2__Y_simdvp__Z_w4,5,6_}
	\end{minipage}
\end{figure*}
\subsection{Semi-static Scheduling Policies}\label{subsec:offline-eval}
As discussed, semi-static scheduling policies in this work take the initial queue state at the moment of the arrival of a time-critical packet into account. 
The semi-static schedule is then determined prior to the transmission of the corresponding packet and defines the allocation of slots until the deadline of the packet. 
We compare the DVP achieved by the proposed WTB-W scheduler, cf. Sec.~\ref{sec:offline-scheduling}, with the exhaustive search methods eDVPUB and Optimum, which compute semi-static policies, respectively, evaluating Eq.~\eqref{eq:opt-offline} and via simulations.
Due to its high complexity, the performances of the optimal policy is shown for smaller parameter sets.
Exhaustive schemes are shown to evaluate the performance of WTB-B and do not represent a feasible way of computing semi-static scheduling policies.
The last comparison scheme is a agnostic allocation of half of the slots for uplink and half of the slots for downlink transmission. 
We refer to this scheme as 50/50\footnote{\note{In the case of an odd number of slots, one extra slot is allocated to the first link.}}.
Throughout the evaluation of the semi-static schemes, we keep the channel error rate $p_e$ at $0.2$.

Fig.~\ref{fig:fig_lineplot_offline-high_xsymm_x1s_x2s_pes2_Ns4_ws_algs_X_w__Y_simdvp__Z_x1,1_2,2_3,3_} shows the DVP for different application deadlines, backlogs, while the frame configuration is fixed with $N$\,$=$\,$4$.
WTB-W and eDVPUB achieve close-to-optimal DVP for all backlog sizes and deadlines.
Thus, taking the initial queue states into account is beneficial and provides up to one order of magnitude improvement compared to the agnostic 50/50 scheme.
As the deadline $w$ increases, the performance gap of the proposed polices in comparison to the 50/50 scheme increases.
For backlogs $x_1$ and $x_2$ equaling $1$, eDVPUB provides a minor improvement in DVP with respect to WTB-W at the expense of higher computational complexity, while for higher backlogs, the performance difference between the two is negligible.

In Fig.~\ref{fig:fig_lineplot_offline-high_xsymm_x1s_x2s_pes2_Ns_ws5_algs_X_N__Y_simdvp__Z_x1,1_2,2_3,3_} we study the impact of different frame lengths, backlogs, while keeping the deadline fixed with $w$\,$=$\,$5$.
Again, the improvement in DVP of the proposed semi-static schedulers with respect to the agnostic 50/50 scheme is confirmed.
The ``stepped'' behaviour of the 50/50 scheme is caused by the different ratios of allocated slots to the two links for even and odd frame lengths.
Otherwise, the results show a minor difference between WTB-W and eDVPUB, while the optimality gap increases slightly for increasing $N$ when $x_1$ and $x_2$ are equal to $1$.

Fig.~\ref{fig:fig_lineplot_offline-high_x1inc_x1s_x2s_pes2_Ns4_ws_algs_X_x1__Y_simdvp__Z_w4,5,6_} and Fig.~\ref{fig:fig_lineplot_offline-high_x2inc_x1s_x2s_pes2_Ns4_ws_algs_X_x2__Y_simdvp__Z_w4,5,6_} show the impact of initial backlog on the DVP for different deadlines and a fixed frame configuration with $N$\,$=$\,$4$.
Increasing $x_1$ has a stronger impact than $x_2$ on the achievable DVP.
This is intuitive as packets backlogged in $x_1$ must be sent by both links.
We note that the gap between the proposed semi-static schedulers and the agnostic 50/50 scheme increases with an increasing backlog $x_2$.
In both figures, we observe again that exploiting the initial queue states leads to a gain of approximately half order of magnitude with respect to the agnostic 50/50 scheme with increasing initial backlogs.
As shown in Fig.~\ref{fig:fig_lineplot_offline-high_x1inc_x1s_x2s_pes2_Ns4_ws_algs_X_x1__Y_simdvp__Z_w4,5,6_}, eDVPUB achieves near-optimal DVP and the performance gap with respect to WTB-W is constant with increasing $x_1$.
In contrast, in Fig.~\ref{fig:fig_lineplot_offline-high_x2inc_x1s_x2s_pes2_Ns4_ws_algs_X_x2__Y_simdvp__Z_w4,5,6_}, the gap between the proposed schedulers and the optimal policy decreases with increasing $x_2$, achieving close-to-optimal DVP.

\begin{figure*}[!t]
	\centering
	\begin{minipage}{\columnwidth}
		\includegraphics{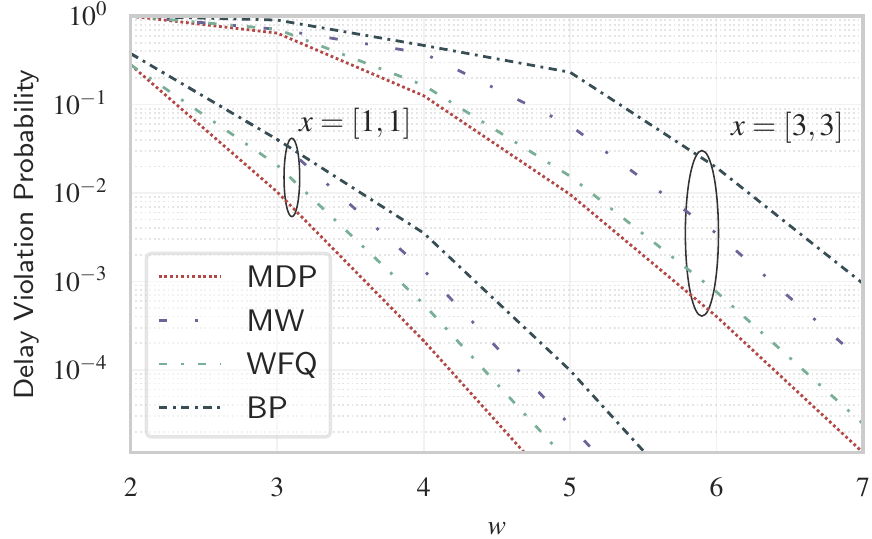}
		\caption{\note{DVP achieved by dynamic schedulers for different deadlines $w$, increasing backlogs $x_1, x_2$, $N=6$, $p_e=0.4$.}}
		\label{fig:fig_lineplot_online-high_xsymm_x1s_x2s_pes4_Ns6_ws_algs_X_w__Y_simdvp__Z_x1,1_3,3_}
	\end{minipage}\hfill
	\begin{minipage}{\columnwidth}
		\includegraphics{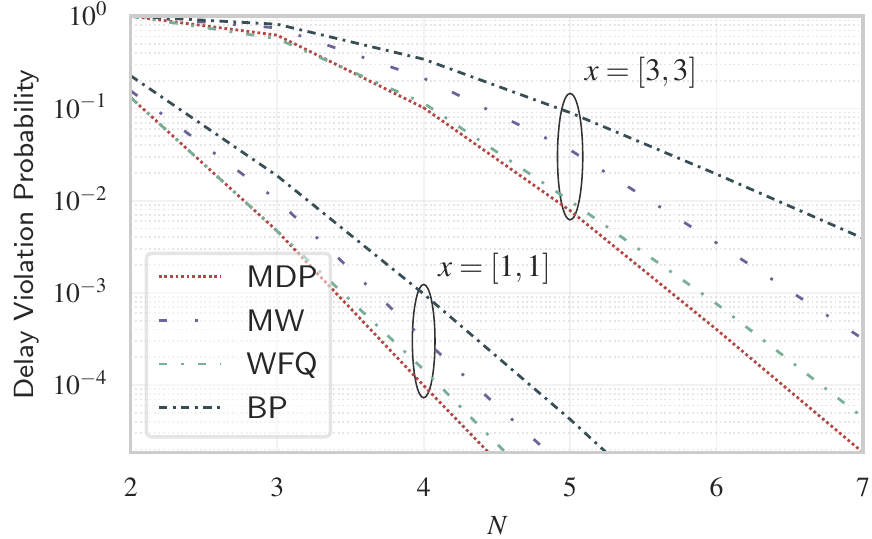}
		\caption{\note{DVP achieved by dynamic schedulers for different frame sizes $N$, increasing backlogs $x_1, x_2$, $w=6$, $p_e=0.4$.}}
		\label{fig:fig_lineplot_online-high_xsymm_x1s_x2s_pes4_Ns_ws6_algs_X_N__Y_simdvp__Z_x1,1_3,3_}
	\end{minipage}
\end{figure*}
\begin{figure*}[!t]
	\centering
	\begin{minipage}{\columnwidth}
		\includegraphics{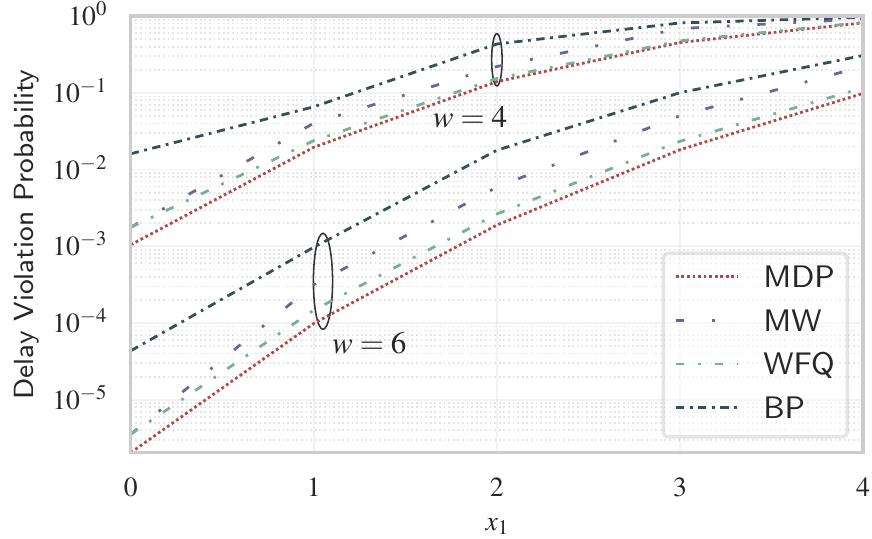}
		\caption{\note{DVP achieved by dynamic schedulers for different backlogs $x_1$ and deadlines $w$, $x_2=1$, $N=4$, $p_e=0.4$.}}
		\label{fig:fig_lineplot_online-high_x1inc_x1s_x2s_pes4_Ns4_ws_algs_X_x1__Y_simdvp__Z_w4,6_}
	\end{minipage}\hfill
	\begin{minipage}{\columnwidth}
		\includegraphics{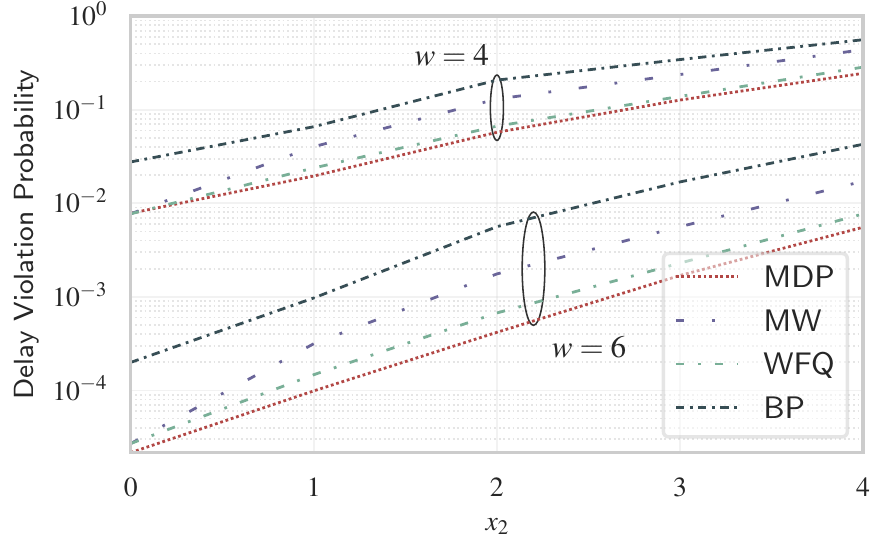}
		\caption{\note{DVP achieved by dynamic schedulers for different backlogs $x_2$ and deadlines $w$, $x_1=1$, $N=4$, $p_e=0.4$.}}
		\label{fig:fig_lineplot_online-high_x2inc_x1s_x2s_pes4_Ns4_ws_algs_X_x2__Y_simdvp__Z_w4,6_}
	\end{minipage}
\end{figure*}

\subsection{Dynamic Scheduling Policies}\label{subsec:online-eval}
In contrast to the semi-static schemes, dynamic schedulers benefit from feedback on the queue states as time evolves, giving them the opportunity to reallocate time slots depending on the queue's backlog evolution. 
This makes dynamic schedulers causing more overhead and complexity, with the advantage of potentially achieving a higher performance.
Initially, we are only turning to different dynamic schedulers in this section.
Concretely, we consider the following schemes:
\begin{itemize}
\item MDP: Our proposed dynamic scheduler from Section \ref{sec:online-scheduling}.
\item Max Weight (MW): Under MW, all slots are allocated to the link with the maximum backlog~\cite{neely2010stochastic}.
\item Weighted-Fair Queuing (WFQ): Under WFQ, slots are allocated to the links in proportion to the ratio between their queue sizes~\cite{parekh1993generalized}.
\item Backpressure (BP): Under BP, slots are allocated to the link with maximum backpressure, where the backpressure at the first link is equal to $x_1-x_2$ while at the second link it is equal to $x_2$~\cite{Tassiulas1990StabilityNetworks}.
\end{itemize}
In all the figures below, the value of $p_e$ is set to $0.4$.

Fig.~\ref{fig:fig_lineplot_online-high_xsymm_x1s_x2s_pes4_Ns6_ws_algs_X_w__Y_simdvp__Z_x1,1_3,3_} shows the DVP achieved by the dynamic schedulers for different application deadlines, backlogs, and a fixed frame configuration with $N$\,$=$\,$6$.
We observe that the proposed MDP-based scheduler outperforms all the other methods.
BP and MW achieve higher DVP as they allocate all the slots to a single link in each frame.
Similar to MDP, WFQ allows a granular allocation of slots to the links and achieves the smallest performance gap.

Fig.~\ref{fig:fig_lineplot_online-high_xsymm_x1s_x2s_pes4_Ns_ws6_algs_X_N__Y_simdvp__Z_x1,1_3,3_} presents DVP achieved by different policies for different frame lengths, backlogs, and a fixed deadline $w$\,$=$\,$6$.
Again, MDP achieves lower DVP while the performances of the other schemes are in line with the previous scenario.
Furthermore, the performance advantage of MDP, with respect to the comparison schemes, increases with increasing frame length.
This advantage arises from the fact that the action space of MDP is larger for large $N$, which results in more accurate slot allocations.

\begin{figure*}[!t]
	\centering
	\begin{minipage}{\columnwidth}
		\includegraphics{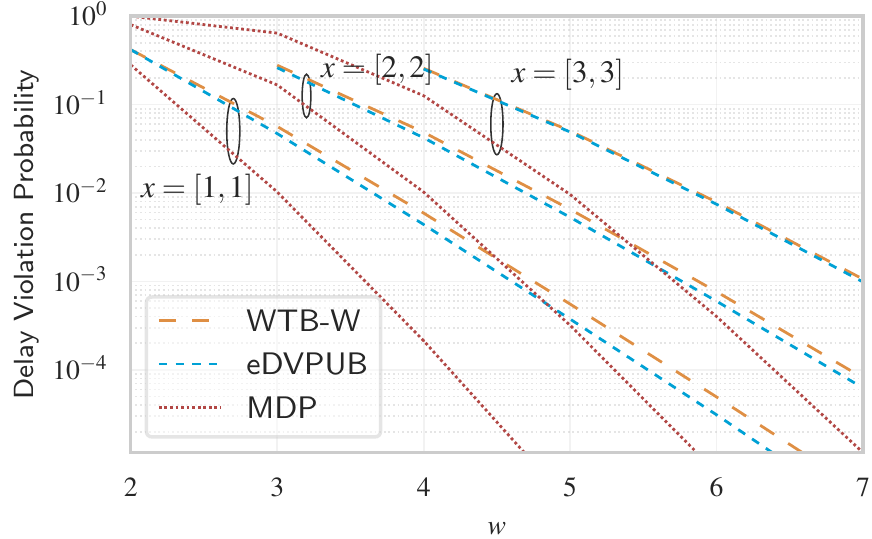}
		\caption{\note{DVP achieved by semi-static and dynamic schedulers for different deadlines $w$, increasing backlogs $x_1, x_2$, $N=6$, $p_e=0.4$.}}
		\label{fig:fig_lineplot_online-low_xsymm_x1s_x2s_pes4_Ns6_ws_algs_X_w__Y_simdvp__Z_x1,1_2,2_3,3_}
	\end{minipage}\hfill
	\begin{minipage}{\columnwidth}
		\includegraphics{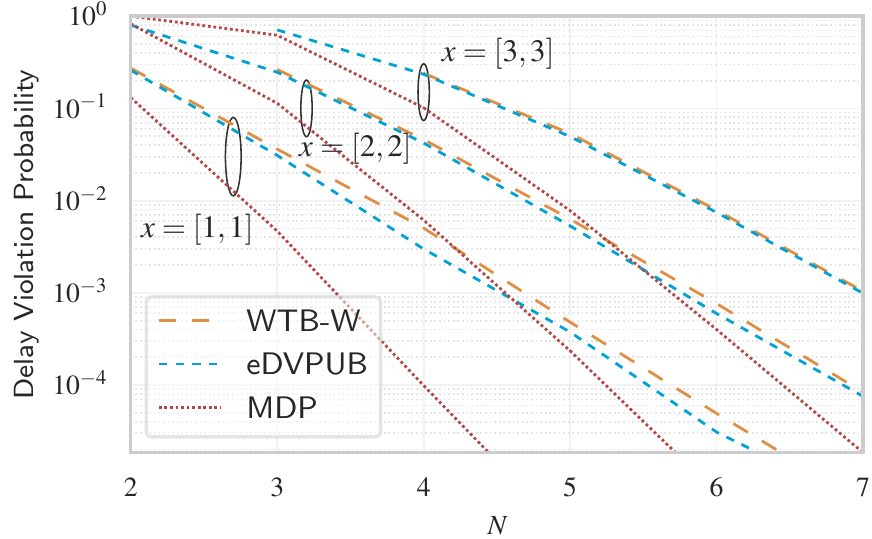}
		\caption{\note{DVP achieved by semi-static and dynamic schedulers for different frame sizes $N$, increasing backlogs $x_1, x_2$, $w=6$, $p_e=0.4$.}}
		\label{fig:fig_lineplot_online-low_xsymm_x1s_x2s_pes4_Ns_ws6_algs_X_N__Y_simdvp__Z_x1,1_2,2_3,3_}
	\end{minipage}
\end{figure*}
\begin{figure*}[!t]
	\centering
	\begin{minipage}{\columnwidth}
		\includegraphics{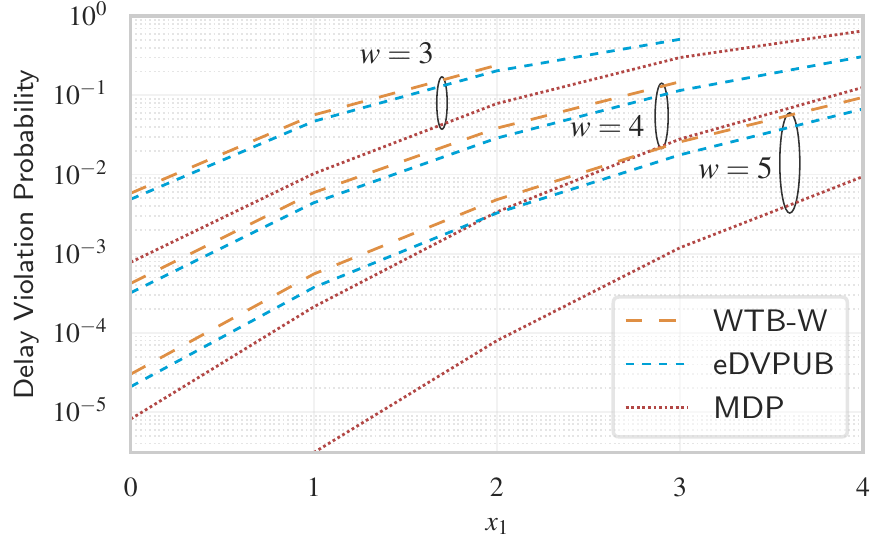}
		\caption{\note{DVP achieved by semi-static and dynamic schedulers for different backlogs $x_1$ and deadlines $w$, $x_2=1$, $N=6$, $p_e=0.4$.}}
		\label{fig:fig_lineplot_online-low_x1inc_x1s_x2s_pes4_Ns6_ws_algs_X_x1__Y_simdvp__Z_w3,4,5_}
	\end{minipage}\hfill
	\begin{minipage}{\columnwidth}
		\includegraphics{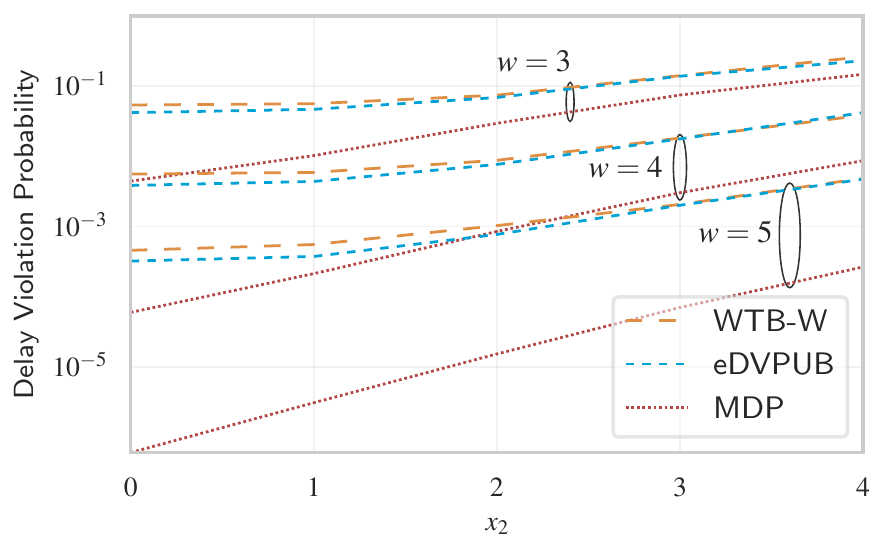}
		\caption{\note{DVP achieved by semi-static and dynamic schedulers for different backlogs $x_2$ and deadlines $w$, $x_1=1$, $N=6$, $p_e=0.4$.}}
		\label{fig:fig_lineplot_online-low_x2inc_x1s_x2s_pes4_Ns6_ws_algs_X_x2__Y_simdvp__Z_w3,4,5_}
	\end{minipage}
\end{figure*}
Fig.~\ref{fig:fig_lineplot_online-high_x1inc_x1s_x2s_pes4_Ns4_ws_algs_X_x1__Y_simdvp__Z_w4,6_} and Fig.~\ref{fig:fig_lineplot_online-high_x2inc_x1s_x2s_pes4_Ns4_ws_algs_X_x2__Y_simdvp__Z_w4,6_} show the impact of the initial backlogs on the DVP achieved by different policies for different deadlines and with a fixed frame configuration $N$\,$=$\,$4$.
Again, increasing $x_1$ has a higher impact on DVP compared to $x_2$.
Due to the allocation of all slots to a link in a frame, the performance gap of BP and MW increases as $x_1$ and $x_2$ increase. 
WFQ, however, maintains a constant gap, being able to adapt the allocations to different initial backlog scenarios.

\subsection{Impact of Network State Information}\label{subsec:off-on-eval}
A direct comparison of the proposed semi-static and dynamic scheduling policies allows to quantify the performance improvement achieved by exploiting up-to-date queue states.
In the following we limit this comparison to the proposed schemes of this paper, WTB-W and MDP, and additionally show eDVPUB to represent the close-to-optimal performances of semi-static policies.
Therefore, the semi-static schemes WTB-W and eDVPUB are benchmarked with the dynamic MDP scheme.
In all following figures, the value of $p_e$ is set to $0.4$.

In Fig.~\ref{fig:fig_lineplot_online-low_xsymm_x1s_x2s_pes4_Ns6_ws_algs_X_w__Y_simdvp__Z_x1,1_2,2_3,3_} the DVP achieved by the proposed scheduling policies is presented for different deadlines, backlogs, and fixed frame configuration $N$\,$=$\,$4$.
Most importantly, we witness a significant performance advantage of MDP in comparison to WTB-W and eDVPUP.
This advantage increases with increasing deadlines, reaching multiple orders of magnitudes.
This effect is intuitive as, at each frame, dynamic scheduling benefits from up-to-date queue states.
A similar effect can be observed in Fig.~\ref{fig:fig_lineplot_online-low_xsymm_x1s_x2s_pes4_Ns_ws6_algs_X_N__Y_simdvp__Z_x1,1_2,2_3,3_} where the DVP performance is shown for different frame lengths, backlogs, and a fixed deadline $w$\,$=$\,$6$. 
Finally, Fig.~\ref{fig:fig_lineplot_online-low_x1inc_x1s_x2s_pes4_Ns6_ws_algs_X_x1__Y_simdvp__Z_w3,4,5_} and~\ref{fig:fig_lineplot_online-low_x2inc_x1s_x2s_pes4_Ns6_ws_algs_X_x2__Y_simdvp__Z_w3,4,5_} show the impact of initial backlog on the DVP for different deadlines and a fixed frame configuration $N$\,$=$\,$6$.
For fixed system parameters, increasing the initial backlogs results in an increasing DVP, which translates into smaller performance gaps between semi-static and dynamic scheduling schemes.

\note{
\section{Conclusions}\label{sec:conclusions}
In this work, we investigated semi-static and dynamic scheduling policies to support the feedback-based communication of interactive applications.
In particular, we proposed scheduling policies aimed at minimizing the end-to-end latency experienced by individual messages traversing the feedback loop, i.e. the delay from capturing the status information until the point in time when the corresponding feedback information is exposed.
The proposed scheduling policies allocate time slots to the transmitters in order to minimize the delay violation probability (DVP) taking the network queue states into account.
On the one hand, \emph{semi-static} policies compute schedules based on the initial queue states by efficiently minimizing the WTB, which is derived applying the union and Chernoff bounds.
On the other hand, \emph{dynamic} policies solve a throughput-optimal MDP, which allocates time slots depending on the queue’s backlog evolution.
Via simulations of several main system parameters and comparison baselines, we demonstrate the effectiveness of the proposed methods in reducing DVP. 
The results show that the proposed WTB-W semi-static scheduling policy achieves up to one order of magnitude improvement compared to a queue-agnostic scheme and close-to-optimal DVP.
Furthermore, simulations prove the superiority of the proposed MDP dynamic scheduler in reducing DVP compared to the existing Backpressure, Max Weight, and Weighted-Fair Queuing algorithms.
Finally, we quantified the performance improvement achieved by exploiting up-to-date queue states with a direct comparison of the proposed semi-static and dynamic scheduling policies.
MDP achieves a significant performance advantage in comparison to WTB-W reaching multiple orders of magnitude.

The contributions of this paper leave space for interesting future work.
The proposed scheduling policies can be investigated for more complex systems, for instance taking into account flow transformation within the feedback loop,  asymmetric transmitter PERs, and non-stationary link qualities. 
Furthermore, the problem can be extended considering multiple feedback loops sharing the same network, thus minimizing the DVP of packets from multiple applications.  
Finally, an interesting next step is to investigate the application of the proposed scheduling policies in the existing wireless systems, such as NCS operating in IWSN or in 5G cellular networks.
}

\appendix
\note{\subsection{Proof of Proposition~\ref{prop1}}\label{app:dvp-derivation}
\allowdisplaybreaks
Combining \eqref{eq:dvp-def} and~\eqref{eq:dynamic-server}, the DVP can be calculated as
\begin{IEEEeqnarray}{rCl}
	\IEEEeqnarraymulticol{3}{l}{
		\text{DVP}(w,y,x_1,x_2) =
	}\nonumber\\* \quad
	&=&\P\{\D(w) < y + x_1 + x_2\} \nonumber\\
	&=& \P\left\{\underset{0\leq u \leq w}{\min} \left[\S^2(w-u) + \A^2(u)<y + x_1 + x_2\right]\right\} \nonumber\\
	&=& \P\Big\{\underset{0\leq u \leq w}{\min} \left[\S^2(w-u) + \D^1(u-1) + x_2\right]\nonumber\\
	&& \qquad\qquad\qquad<y + x_1 + x_2\Big\}\nonumber\\
	&=& \P\Big\{ \!\left\{\S^2(w) \! < \! y + x_1 + x_2\right\} \cup \left\{\S^2(w-1) \! < \! y + x_1\right\} \cup \nonumber\\
	&&\quad\bigcup_{u=2}^{w}\big\{\underset{0\leq v \leq u-1}{\min} [\S^2(w-u) + \S^1(u-1-v) + \nonumber\\
	&&\qquad\qquad\qquad\A(v) < y + x_1]\big\} \Big\}\nonumber\\
	&=& \P\Big\{ \!\left\{\S^2(w) \! < \!y + x_1 + x_2\right\} \cup \left\{\S^2(w-1) \! < \! y + x_1\right\} \cup \nonumber\\
	&&\quad\bigcup_{u=2}^{w} \left\{\S^2(w-u) + \S^1(u-1) < y + x_1\right\}\Big\}. \nonumber
\end{IEEEeqnarray}

\subsection{Proof of Theorem~\ref{th1}}\label{app:wtb2q-convexity}
To simplify the analysis of Eq.~\eqref{eq:wtb-w-server}, we use $\alpha = (1-p_e) e^{-s}+p_e$ and $\beta=k_a+x_1-1$

\begin{IEEEeqnarray}{rCl}
    \text{WTB}(\mathbf{n}^2) &=& \min_{s>0} \alpha^{\sum_{i=0}^{w} n^i} e^{s\,(\beta+x_2)} +\nonumber\\
    &&\sum_{u=1}^{1+w}\alpha^{(u-1)N-\sum_{j=0}^{u-2} n^j+\sum_{k=0}^{w-u} n^k} e^{s\,\beta}\nonumber\\
    &=& \min_{s>0} e^{s\,\beta} \Bigg[\alpha^{\sum_{i=0}^{w} n^i} e^{s x_2} +\nonumber\\
    &&\sum_{u=1}^{1+w} \alpha^{(u-1)N-\sum_{j=0}^{u-2} n^j+\sum_{k=0}^{w-u} n^k}\Bigg].\label{eq:wtb2q-d-rewritten}
\end{IEEEeqnarray}

Applying the definition of convexity to Eq.~\eqref{eq:wtb2q-d-rewritten}, $\forall \mathbf{n},\mathbf{m} \in\mathbb{N}^{1 \times w}, \lambda\in\left[0,1\right]$ we obtain Eq.~\eqref{eq:wtb2q-d-proof}.
\addtocounter{equation}{1}%
\setcounter{storeeqcounter}%
{\value{equation}}%
From Eq.~\eqref{eq:wtb2q-d-proof}, (1) we have applied the definition of convexity to the exponentials $\alpha^{f(\lambda\mathbf{n^1}+(1-\lambda)\mathbf{n^2})}$ knowing that they are convex and (2) we have used the fact that the minimum of the sum is smaller or equal of the sum of the minima.}

\begin{figure*}[!t]
\normalsize
\setcounter{tempeqcounter}{\value{equation}} 
\note{\begin{IEEEeqnarray}{rCl}
\setcounter{equation}{\value{storeeqcounter}} 
    &&\text{WTB}\left(\lambda \mathbf{n} + (1-\lambda)\mathbf{m}\right)\nonumber\\
    &&=\min_{s>0} e^{s\,\beta} \left[\alpha^{\sum_{i=0}^{w}\lambda n^1_i + (1-\lambda)n^2_i} e^{s x_2} + \sum_{u=1}^{1+w} \alpha^{(u-1)N-\sum_{j=0}^{u-2} \lambda n^1_j + (1-\lambda)n^2_j+\sum_{k=0}^{w-u} \lambda n^1_k + (1-\lambda)n^2_k}\right] \nonumber\\
    &&\symtext{(1)}{\leq}\min_{s>0} e^{s\,\beta} \left[e^{s x_2}\left(\lambda\alpha^{\sum_{i=0}^{w} n^1_i}+(1-\lambda)\alpha^{\sum_{i=0}^{w}n^2_i} \right) + \right.\nonumber\\
    &&\left.\qquad\qquad\quad\sum_{u=1}^{1+w} \alpha^{(u-1)N} \left(\lambda\alpha^{-\sum_{j=0}^{u-2} n^1_j} + (1-\lambda)\alpha^{-\sum_{j=0}^{u-2}n^2_j}+\lambda \alpha^{\sum_{k=0}^{w-u} n^1_k} + (1-\lambda)\alpha^{\sum_{k=0}^{w-u} n^2_k}\right)\right] \nonumber\\
    &&=\min_{s>0} e^{s\,\beta} \left[\lambda\left(\alpha^{\sum_{i=0}^{w} n^1_i}e^{s x_2}+\sum_{u=1}^{1+w} \alpha^{(u-1)N} \alpha^{-\sum_{j=0}^{u-2} n^1_j} \alpha^{-\sum_{j=0}^{u-2} n^1_j}\right) + \right.\nonumber\\
    &&\left.\qquad\qquad\quad(1-\lambda)\left(\alpha^{\sum_{i=0}^{w} n^2_i}e^{s x_2}+\sum_{u=1}^{1+w} \alpha^{(u-1)N} \alpha^{-\sum_{j=0}^{u-2} n^2_j} \alpha^{-\sum_{j=0}^{u-2} n^2_j}\right)\right]\nonumber\\
    &&\symtext{(2)}{\leq}\lambda\min_{s>0}e^{s\,\beta}\left[ \alpha^{\sum_{i=0}^{w} n^1_i}e^{s x_2}+\sum_{u=1}^{1+w} \alpha^{(u-1)N} \alpha^{-\sum_{j=0}^{u-2} n^1_j} \alpha^{-\sum_{j=0}^{u-2} n^1_j}\right]+\nonumber\\
    &&\quad(1-\lambda)\min_{s>0}e^{s\,\beta}\left[ \alpha^{\sum_{i=0}^{w} n^2_i}e^{s x_2}+\sum_{u=1}^{1+w} \alpha^{(u-1)N} \alpha^{-\sum_{j=0}^{u-2} n^2_j} \alpha^{-\sum_{j=0}^{u-2} n^2_j}\right]\nonumber\\
    &&=\lambda\text{WTB}\left(\mathbf{n}\right) + (1-\lambda)\text{WTB}\left(\mathbf{m}\right).\label{eq:wtb2q-d-proof}
\end{IEEEeqnarray}}
\setcounter{equation}{\value{tempeqcounter}} 
\hrulefill
\vspace*{4pt}
\end{figure*}

\section*{Acknowledgment}
This work was supported by the DFG Priority Programme 1914 grant number KE1863/5-1.

\bibliographystyle{IEEEtran}
\bibliography{./resources/references-all}


\end{document}